\documentclass[journal]{IEEEtran}

\usepackage{amsmath,amssymb,amsfonts,bm,mathtools}
\usepackage{algorithm}
\usepackage{algorithmicx}
\usepackage{algpseudocode}
\usepackage{array}
\usepackage{booktabs}
\usepackage{cite}
\usepackage{graphicx}
\usepackage[caption=false,font=footnotesize]{subfig}
\usepackage{xcolor}
\usepackage{url}
\usepackage[colorinlistoftodos]{todonotes}

\newtheorem{assumption}{Assumption}
\newtheorem{theorem}{Theorem}

\newtheorem{lemma}{Lemma}
\newtheorem{remark}{Remark}
\makeatletter
\@ifundefined{proof}{%
}{}
\makeatother

\newcommand{\BA}{{\bold{A}}}
\newcommand{\BB}{{\bold{B}}}

\newcommand{\BD}{{\bold{D}}}
\newcommand{\BE}{{\bold{E}}}
\newcommand{\BF}{{\bold{F}}}
\newcommand{\BG}{{\bold{G}}}
\newcommand{\BH}{{\bold{H}}}
\newcommand{\BI}{{\bold{I}}}

\newcommand{\BP}{{\bold{P}}}
\newcommand{\BQ}{{\bold{Q}}}
\newcommand{\BR}{{\bold{R}}}
\newcommand{\BS}{{\bold{S}}}
\newcommand{\BT}{{\bold{T}}}
\newcommand{\BU}{{\bold{U}}}
\newcommand{\BV}{{\bold{V}}}

\newcommand{\BX}{{\bold{X}}}
\newcommand{\BY}{{\bold{Y}}}
\newcommand{\BZ}{{\bold{Z}}}

\newcommand{\Ba}{{\bold{a}}}
\newcommand{\Bb}{{\bold{b}}}

\newcommand{\Bg}{{\bold{g}}}

\newcommand{\Bn}{{\bold{n}}}
\newcommand{\Bo}{{\bold{o}}}

\newcommand{\Bq}{{\bold{q}}}

\newcommand{\Bs}{{\bold{s}}}

\newcommand{\Bu}{{\bold{u}}}
\newcommand{\Bv}{{\bold{v}}}

\newcommand{\Bx}{{\bold{x}}}
\newcommand{\By}{{\bold{y}}}
\newcommand{\Bz}{{\bold{z}}}

\newcommand{\BGamma}{\bold{\Gamma}}
\newcommand{\BOmega}{\bold{\Omega}}
\newcommand{\BTheta}{\bold{\Theta}}
\newcommand{\BSigma}{\bold{\Sigma}}

\newcommand{\BPi}{\bold{\Pi}}

\newcommand{\BPsi}{\boldsymbol{\Psi}}
\newcommand{\BLambda}{\boldsymbol{\Lambda}}

\newcommand{\Bmu}{\boldsymbol{\mu}}
\newcommand{\Btau}{\boldsymbol{\tau}}

\newcommand{\Beta}{\boldsymbol{\eta}}
\newcommand{\Bdelta}{\boldsymbol{\delta}}
\newcommand{\E}{\mathbb{E}}

\newcommand{\R}{\mathbb{R}}
\newcommand{\C}{\mathbb{C}}

\newcommand{\cC}{\mathcal{C}}

\newcommand{\cN}{\mathcal{N}}
\newcommand{\cO}{\mathcal{O}}
\newcommand{\cP}{\mathcal{P}}

\newcommand{\cR}{\mathcal{R}}

\newcommand{\cV}{\mathcal{V}}

\def\vzero{{\bm{0}}}

\newcommand{\bcC}{\bm{\mathcal{C}}}

\newcommand{\bcR}{\bm{\mathcal{R}}}
\newcommand{\bcT}{\bm{\mathcal{T}}}

\newcommand{\CN}{\mathcal{CN}}

\newcommand{\Prb}{\mathbb{P}}
\newcommand{\Tr}{\operatorname{Tr}}

\newcommand{\blkdiag}{\operatorname{blkdiag}}

\newcommand{\norm}[1]{\left\lVert#1\right\rVert}
\newcommand{\abs}[1]{\left|#1\right|}
\newcommand{\lnorm}[1]{\lVert#1\rVert}

\newcommand{\QED}{\hfill \ensuremath{\blacksquare}}

\newcommand\numberthis{\addtocounter{equation}{1}\tag{\theequation}}

\DeclareMathOperator{\Cov}{Cov}
\DeclareMathOperator{\Var}{Var}

\usepackage{geometry}
\allowdisplaybreaks[4]

\begin{document}

\title{Task-Oriented Precoding for Edge Inference over Large-Scale MIMO Systems}

\author{\IEEEauthorblockN{Hongru~Li, Zeyan~Zhuang, Zixin~Wang, Hengtao~He, Shenghui~Song,\\ Jun Zhang,~\textit{Fellow,~IEEE,} and {Khaled B.~Letaief,~\textit{Fellow,~IEEE}}}
\thanks{Hongru~Li, Zeyan~Zhuang, Zixin~Wang, Shenghui~Song, Jun~Zhang, and Khaled~B.~Letaief are with the Department of Electronic and Computer Engineering, The Hong Kong University of Science and Technology (HKUST), Hong Kong (e-mail: hlidm@connect.ust.hk, zzhuangac@connect.ust.hk, eewangzx@ust.hk, eeshsong@ust.hk, eejzhang@ust.hk, eekhaled@ust.hk).}
\thanks{Hengtao He is with the National Mobile Communications Research Laboratory, School of Information Science and Engineering, Southeast University, Nanjing, 210093, China (e-mail: hehengtao@seu.edu.cn).
}
\vspace{-8mm}
}

\maketitle

\begin{abstract}
Future wireless networks are expected to support networked artificial intelligence (AI) services, where multiple devices transmit learned features to an edge server for distributed inference. This setting calls for task-oriented physical-layer optimization, where wireless transmission should preserve useful information for inference rather than only maximize the rate or reconstruct the transmitted signals. A key physical-layer control variable is the multiple-input multiple-output (MIMO) precoder, which determines how device features are shaped and combined over wireless channels. Existing task-oriented precoding methods typically adapt the precoder to instantaneous channel state information at the transmitter (CSIT). However, in multi-device MIMO systems, acquiring the aggregate channel, feeding back CSI or optimized precoders, and reoptimizing across coherence blocks introduce substantial overhead. This paper develops a random-matrix-theoretic framework based on statistical CSIT that designs a slow-timescale precoder from channel covariance statistics and training-set feature statistics, without requiring instantaneous CSIT. We adopt maximal coding rate reduction (MCR\textsuperscript{2}) to measure the class separability of the received features, yielding a task-aware utility for MIMO precoder design. Since this utility still depends on random small-scale fading, we derive a deterministic approximation that converts it into a fixed-point objective depending only on long-term statistics and large-system dimension ratios by leveraging random matrix theory. A projected block-coordinate ascent and successive convex approximation algorithm is developed to optimize this deterministic objective under per-device power constraints. Experiments on ModelNet10 validate the approximation and show that the proposed statistical precoder improves task-aware mode allocation and inference performance over competitive benchmarks.
\end{abstract}

\begin{IEEEkeywords}
Task-oriented communications, random matrix theory, MIMO precoding.
\end{IEEEkeywords}

\section{Introduction}

\IEEEPARstart{W}{ith} the continuing evolution toward sixth-generation (6G) networks, wireless systems are expected to serve not only as connectivity infrastructures, but also as platforms for distributed intelligence~\cite{letaief2019roadmap}. In such systems, intelligence is no longer confined to a single device or cloud server, but is distributed across sensors, edge devices, edge servers, and wireless links that jointly support decision-making~\cite{letaief2022edgeai}. This shift calls for task-oriented communication and physical-layer designs that preserve task-relevant information for downstream tasks, rather than aiming solely at bit recovery, data reconstruction, or throughput~\cite{gunduz2023beyond}.

Existing studies on task-oriented communication have focused on learning compact task-relevant representations for downstream intelligent task inference~\cite{xie2021deepsc,shao2022learning,gunduz2023beyond,shao2023multidevice}. Nevertheless, in multiple-input multiple-output (MIMO) systems, the inference performance depends critically on both the precoding of the features and the channel conditions. Therefore, recent works have started to incorporate task objectives into physical layer design. Task-oriented over-the-air computation was proposed in~\cite{wen2023task}, where the transceivers were jointly designed for a classification-oriented surrogate instead of the conventional aggregation mean-square error (MSE). Beyond over-the-air computation, an end-to-end framework for task-oriented communication over MIMO channels was developed in~\cite{cai2025end}, where feature encoders, transmit precoders, and inference models at the receiver side were jointly trained under an information-theoretic objective. More closely, the authors in \cite{cai2024taskcomm} adopt the maximal coding rate reduction (MCR$^2$) as the performance metric for multi-device edge inference system design. This metric captures the separability among different classes and only depends on the second-order statistics of the features.

Notably, prior works commonly assume that instantaneous channel state information at the transmitter (CSIT) is available. However, in practical systems, the transmitter typically acquires channel state information (CSI) via feedback from the receiver. In rapidly varying environments, accurate channel estimation becomes challenging, making instantaneous CSIT unachievable. Moreover, the precoder must be redesigned in such scenarios. Given the high computational complexity of task-oriented performance metrics, this frequent recomputation becomes prohibitive~\cite{marzetta2010noncooperative,larsson2014massive,love2008overview}. To address this issue, a more practical approach is to consider statistical CSIT for precoder design~\cite{jafar2005covariance,larsson2014massive}. In conventional MIMO communications, the ergodic capacity and outage probability have been extensively studied by leveraging tools from random matrix theory (RMT) \cite{tulino2004random, couillet2011book, hachem2008mimo, couillet2011macde, Dumount2010Rician, Wen2007}, yielding rich insights. However, the analysis and optimization of task-oriented performance metrics based on statistical CSIT still remain in their infancy.
\par
In this paper, we consider the multi-device edge inference system design with statistical CSIT. We also consider the received-feature MCR\(^2\) utility as the task-aware performance metric. Unfortunately, this metric consists of a set of coding rate reduction functions, where the feature statistics and the heterogeneous multi-device MIMO channels are coupled in each function, making it very challenging to evaluate. To this end, we first leverage RMT to derive the approximation for the metric. Based on the analytical result, we propose an optimization algorithm to obtain the task-oriented precoder.

The main contributions are summarized as follows.
\begin{itemize}
    \item [1)] For the multi-device edge inference system, we formulate a precoder optimization problem under classification tasks. In particular, we adopt the received-feature MCR\(^2\) utility as the task-aware objective and optimize the precoder under per-device power constraints, given training-set feature statistics and statistical CSIT.
    \item [2)] We derive a deterministic approximation for MCR\(^2\) utility using RMT, and the convergence rate is of order $\cO(N^{-1})$. This result is general and can be applied to the analysis of distributed and decentralized MIMO systems.
    \item [3)] The theoretical result is applied to design the precoder for maximizing the MCR\(^2\) utility. To address the non-convexity and the block structure constraint of the precoder, we propose a block coordinate ascent (BCA) method that adopts successive convex approximation (SCA) for each block.
    \item [4)] We conduct numerical experiments on ModelNet10. The experiment results demonstrate the accuracy of the theoretical results and the effectiveness of the proposed method. It is shown that the performance of the proposed precoder approaches that of the oracle in favorable regimes. We further provide diagnostic analyses that illustrate how the proposed precoder distributes transmit power among task-relevant features and channel modes.
\end{itemize}

The rest of this paper is organized as follows. 
In Section~\ref{sec:related_work}, we review related work. 
In Section~\ref{sec:system_model}, we introduce the system model and formulate the task-oriented precoding problem. 
In Section~\ref{sec:deterministic_equivalent}, we derive the deterministic approximation for the received-feature utility.
The proposed BCA-SCA algorithm for optimizing the precoder is provided in Section~\ref{sec:optimization}.
Section~\ref{sec:experiments} reports numerical experiments and  
Section~\ref{sec:conclusion} concludes the paper.

\emph{Notations:} Throughout this paper, lowercase and uppercase boldface letters denote vectors and matrices, respectively. We use \(\C^N\) and \(\C^{N\times M}\) to denote the spaces of \(N\)-dimensional complex vectors and \(N\)-by-\(M\) complex matrices, respectively. The transpose and conjugate transpose operators are denoted by \((\cdot)^T\) and \((\cdot)^H\), respectively. We write \(\|\cdot\|\) for the spectral norm of a matrix or the Euclidean norm of a vector, and \(\|\cdot\|_\infty\) for the infinity norm. The trace and determinant of a matrix \(\BA\) are denoted by \(\Tr(\BA)\) and \(\det(\BA)\), respectively. The identity matrix of size \(N\) and the all-zero vector are denoted by \(\BI_N\) and \(\vzero\), respectively. For Hermitian matrices \(\BA\) and \(\BB\), \(\BA\succeq\BB\) means that \(\BA-\BB\) is positive semidefinite. The notation \(\blkdiag(\cdot)\) denotes the block-diagonal matrix. The probability measure and expectation operator are denoted by \(\Prb[\cdot]\) and \(\E[\cdot]\), respectively. \(\CN(\Bmu,\BSigma)\) denotes the circularly symmetric complex Gaussian distribution with mean \(\Bmu\) and covariance \(\BSigma\). For a positive integer \(N\), \([N]\) denotes the set \(\{1,\ldots,N\}\). The notation \(\xrightarrow[]{a.s.}\) denotes almost sure convergence, and \(\cO(\cdot)\) denotes the standard Big-O notation. 
\section{Related Work}
\label{sec:related_work}

\subsection{Task-Oriented Precoder Design}
The design goal of task-oriented communication systems is to enhance the ability of the received signal to serve downstream tasks, rather than merely focusing on the reconstruction of the transmitted signal~\cite{strinati2021goal}. Existing works mainly consider inference tasks and focus on learning effective data representations and designing end-to-end transceivers to improve the inference performance \cite{xie2021deepsc,shao2022learning,gunduz2023beyond,shao2023multidevice}. At the physical layer, precoding directly affects the transmission of encoded features over the wireless channel, thereby influencing the performance of downstream tasks. In \cite{wen2023task}, the authors proposed a task-oriented over-the-air computation scheme where the transmit precoding and receive beamforming were jointly designed to improve the inference accuracy. In \cite{cai2025end}, an end-to-end framework for task-oriented semantic communication over MIMO channels was developed, in which feature encoders, precoders, and classifiers were jointly trained by incorporating the instantaneous CSI into the end-to-end learning pipeline. More closely related to this work, received-feature MCR\(^2\)-based precoding was proposed in~\cite{cai2024taskcomm}, where the class separability of the received representation was used as the semantic utility. However, these task-oriented precoder designs rely on instantaneous CSIT or state-dependent precoder feedback. The design of precoders for semantic communication systems based on statistical CSIT remains unexplored.

\subsection{MCR$^2$ Representation Learning}
MCR$^2$ was introduced as a geometric criterion for discriminative representation learning \cite{yu2020mcr2}. The main idea is to evaluate feature quality through a coding-rate decomposition: a desirable representation should preserve a rich global feature structure while making samples from the same class concentrate around lower-dimensional class-specific structures. Accordingly, MCR$^2$ is defined as the coding rate of the entire feature set minus the weighted sum of the coding rates of per-class feature sets, thereby encouraging both inter-class diversity and intra-class compactness. Recent work has further analyzed the geometric properties and the optimal solution of MCR$^2$, providing theoretical insight into why this criterion yields structured and discriminative representations~\cite{wang2024globalmcr2}. For task-oriented communication systems, MCR$^2$ has also been employed to characterize the received-feature class-separability for semantic precoding design~\cite{cai2024taskcomm}, which motivates its use as a task-aware performance metric in this paper.


\subsection{Analysis and Precoder Design for MIMO Communications with Statistical CSIT}

Multi-antenna technology has proven to be a promising approach for enhancing both spectral and energy efficiency \cite{telatar1999capacity,goldsmith2003capacity, vu2007linear}.  Due to the complex distributions and spatial correlations, performance analysis for MIMO systems is essential. Fortunately, RMT provides a powerful tool for handling the randomness of MIMO channels  \cite{couillet2011book}. For instance, in \cite{hachem2008mimo}, the authors derived the deterministic approximations for the ergodic mutual information and outage probability under correlated Rayleigh channels.  Subsequently, the ergodic mutual information for Rician channels was analyzed in \cite{Dumount2010Rician}. As antenna architectures evolve to offer higher flexibility, recent studies have covered more general systems, including distributed MIMO channels \cite{Junzhang2013, zhuang2025DBP}, double-scattering MIMO channels \cite{zhang2023doubleScattering}, intelligent reflecting surface (IRS)-aided MIMO channels \cite{zhang2024irsWiretap, zhuang2025}, and holographic MIMO channels \cite{zhang2024nonCenteredNonSeparable}. Accordingly, the precoder was designed based on these analyses with statistical CSIT \cite{Coui2011MIMOMAC, Xin2022}. In particular, to achieve the ergodic rate region boundary of the MIMO multiple access channel, an iterative water-filling algorithm was proposed to optimize the precoder in \cite{Coui2011MIMOMAC}. In \cite{Xin2022}, the authors considered the IRS-aided MIMO communication system, where the precoder and the phase-shift matrix are jointly optimized to minimize the outage probability. However, these designs are typically developed for capacity-oriented transmission or signal reconstruction, and their objectives are not aligned with task utility. This leaves open the problem of MIMO precoding for task-oriented feature transmission with statistical CSIT.

\par


\section{System Model and Problem Formulation}
\label{sec:system_model}

This section presents the system model for cooperative multi-device MIMO inference with statistical CSIT, as illustrated in Fig~\ref{fig:system_model}. In particular, we first describe the encoded-feature transmission model and the heterogeneous Rayleigh channel model. We then introduce the MCR\(^2\) utility for the received features and formulate the precoder design problem accordingly.

\begin{figure*}[t]
\centering
\includegraphics[width=0.9\textwidth]{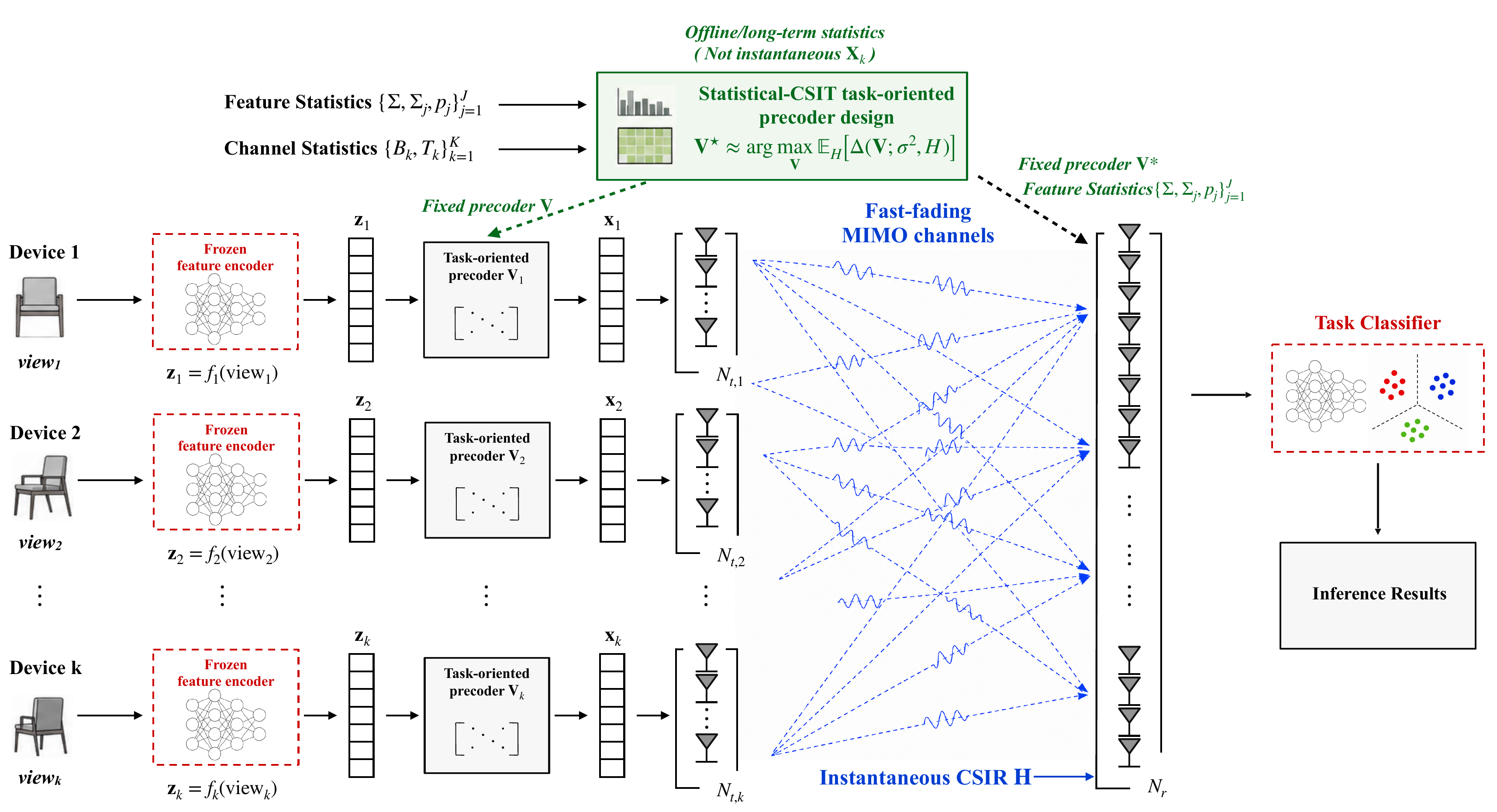}
\caption{An illustration of multi-device edge inference over MIMO systems. At the transmitter side, the precoder $\BV$ is designed based on channel statistics \(\{\BB_k,\BT_k\}_{k=1}^{K}\) and encoded training-feature statistics \(\{\BSigma,\BSigma_j,p_j\}\).
}
\label{fig:system_model}
\end{figure*}

\subsection{System Model}
\subsubsection{Signal Model}
\label{subsec:statistical_csit_signal_model}
\label{subsec:semantic_precoding_power}
\label{subsec:block_gaussian_channel}

We consider a multi-device cooperative MIMO edge-inference system with \(K\) edge devices and one edge server~\cite{cai2024taskcomm}. Device \(k\in[K]\) observes a local input \(\Bo_k\) and maps it through a fixed feature encoder \(f_k(\cdot;\theta_k)\) into a raw encoded feature vector \(\widetilde{\Bz}_k=f_k(\Bo_k;\theta_k)\in\C^{d_k}\). To allocate the transmit power to the informative signal components, the transmitted feature vector is centered as $\Bz_k = \widetilde{\Bz}_k - \E[\widetilde{\Bz}_k]$.

\par
Assume each edge device \(k \in [K]\) is equipped with \(N_{t,k}\) antennas and the edge receiver is equipped with \(N_r\) antennas. For each device \(k \in [K]\), a linear precoder \(\BV_k\in\C^{N_{t,k}\times d_k}\) is applied to \(\Bz_k\)
and the transmitted signal of device \(k\) is 
\begin{equation}
    \Bx_k
    =
    \BV_k\Bz_k
    \in\C^{N_{t,k}}.
    \label{eq:device_transmitted_signal}
\end{equation}
As a result, the average transmitted power at device $k$ is given by
\begin{equation}
    \E\left[ \|\Bx_k\|^2 \right]
    =
    \Tr \left(\BV_k\BSigma^{(kk)}\BV_k^H \right),
    \label{eq:device_average_power}
\end{equation}
where \(\BSigma^{(kk)}=\E[\Bz_k\Bz_k^H]\) denotes the covariance matrix of the local centered feature. With per-device power budget $P_k > 0$, the power constraint for precoder $\BV_k$ is given by
\begin{equation}
 \Tr \left(\BV_k\BSigma^{(kk)}\BV_k^H \right) \leq N_{t,k}P_k.
\label{eq:power_constraint}
\end{equation}
Here, the factor $N_{t, k}$ is used to normalize the transmit power and is subsequently absorbed into the channel. The details are given in \eqref{eq:block_gaussian_user_channel}.
\par
We assume the symbol-level synchronization among the devices at the edge receiver and the received signal \(\By\in\C^{N_r}\) can be expressed as
\begin{equation}
    \By
    =
    \sum_{k\in[K]}\BH_k\Bx_k+\Bn,
    \label{eq:received_signal}
\end{equation}
where \(\Bn\sim\CN(\vzero,\delta_0^2\BI_{N_r})\) denotes the additive white Gaussian noise (AWGN) at the receiver and \(\BH_k\in\C^{N_r\times N_{t,k}}\) represents the channel between device \(k\) and the receiver.  To simplify the notation, the signal model can be rewritten in a more compact form 
\begin{align}
    \By = \BH \BV \Bz + \Bn,
\end{align}
where we define $\BH = [\BH_1, \ldots, \BH_K] \in \C^{N_r \times N_k}$,  $\BV = \blkdiag(\BV_1, \ldots, \BV_K) \in \C^{N_k \times d}$, and $\Bz = [\Bz_1, \ldots, \Bz_K] \in \C^d$, with dimensions $N_{t} = \sum_{k \in [K]}N_{t, k}$ and $d = \sum_{k \in [K]} d_k$.
\par
\subsubsection{Channel Model}: Due to limited angular spread and insufficient antenna spacing, spatial correlation between the transceiver antennas is inevitable. In this paper, we consider the Kronecker correlated Rayleigh model~\cite{Hoydis2013massivemimo} for the channel:
\begin{equation}
    \BH_k
    =
    \sqrt{\beta_k} \BR_k^{1/2}\BX_k\BT_k^{1/2},
    \label{eq:block_gaussian_user_channel}
\end{equation}
where \(\BR_k\in\C^{N_r\times N_r}\) and \(\BT_k^{1/2}\in\C^{N_{t,k}\times N_{t,k}}\) are Hermitian nonnegative and represent the spatial correlation of the receive and transmit antennas, respectively. The large-scale fading coefficient is denoted by \(\beta_k\). Here, \(\BX_k\in\C^{N_r\times N_{t,k}}\) is a random matrix with i.i.d. \(\CN(0,N_{t,k}^{-1})\) entries, and \(\BX_1,\ldots,\BX_K\) are mutually independent.

Since the environment changes rapidly, it is difficult to obtain instantaneous CSI at the transmitter side. Therefore, we consider that the edge devices only know the statistical CSI \(\{\BB_k,\BT_k\}_{k=1}^{K}\) with $\BB_k = \beta_k \BR_k$ denoting the effective correlation matrix for simplicity, and the feature statistics \cite{jafar2005covariance,jorswieck2004correlation}, which are defined in the following section.

\subsubsection{MCR\(^2\) Utility}
\label{subsec:instantaneous_mcr2}

We consider a classification task and assume there are \(J\) classes. Let $\ell(\Bz) \in [J]$ denote the label of feature $\Bz$ and $p_j = \Prb[\ell(\Bz) = j]$.  The covariance matrix and the class-conditional covariance matrix of the full feature vector $\Bz$ are denoted by $ \BSigma = \E[\Bz\Bz^H]$ and $\BSigma_j = 
    \E [(\Bz-\Bmu_j)(\Bz-\Bmu_j)^H\mid \ell(\Bz)=j ]
$,
respectively, where \(\Bmu_j=\E[\Bz\mid\ell(\Bz)=j]\) represents the conditional mean. Then, we define the feature statistics as 
$
    \{\BSigma, \BSigma_1, \ldots, \BSigma_J, p_1, \ldots, p_J\},
$
which are estimated from the training set in the implementation.
\par

For receiver-side classification, we use a maximum a posteriori (MAP) classifier based on a Gaussian Mixture Model (GMM) approximation for the class-conditional received-feature distribution \cite{cai2024taskcomm, reynolds2009gaussian}. The edge receiver is assumed to know the instantaneous channel state information (CSIR) \(\mathbf H\) and the precoder \(\mathbf V\). It also has access to the training-set feature statistics needed to form the Gaussian likelihood. The label of the received feature is then inferred as
\begin{equation}
    \widehat{\ell}
    = \arg\max_{j \in [J]} p(j| \By, \BH, \BV) = 
    \arg\max_{j\in[J]}
    p_j\,p(\By\mid j,\BH, \BV).
    \label{eq:map_classifier}
\end{equation}
Directly optimizing the MAP classification rule is generally intractable. Following the received-feature MCR\(^2\) formulation in \cite{cai2024taskcomm}, we therefore use a class-separability surrogate that increases the coding rate of the received total feature distribution while penalizing the weighted coding rates of the received class-conditional feature distributions. In particular, the MCR\(^2\) surrogate utility for inference accuracy is defined as
\begin{equation}
\label{eq: mcr2 def}
    \Delta(\BV;\sigma^2, \BH)
    =
    I_{\BSigma}(\BV;\sigma^2, \BH)
    -
    \sum_{j=1}^{J}p_jI_{\BSigma_j}(\BV;\sigma^2, \BH),
\end{equation}
where
\begin{equation}
\label{eq: I bcC term}
    I_{\bcC}(\BV;\sigma^2, \BH)
    =
    \log\det\!\left(
        \BI_{N_r}
        +
        \frac{1}{\sigma^2}
        \BH\BV\bcC\BV^H\BH^H
    \right)
\end{equation}
for  \(\bcC\in\widetilde{\Sigma} = \{\BSigma, \BSigma_1,\ldots, \BSigma_J\}\). Here, the parameter $\sigma^2$ is given by $\sigma^2 = \beta / \alpha$, with \(\beta=1+\alpha\delta_0^2\) and \(\alpha=N_r/\varepsilon^2\), where \(\varepsilon>0\) represents the MCR\(^2\) coding precision. We note that the function $I_{\BSigma}$ in \eqref{eq: mcr2 def} quantifies the coding rate of the received feature vector $\By$, whereas the function $I_{\BSigma_j}$ measures the coding rate for features belonging to class $j$. 

\subsection{Problem Formulation}
\label{subsec:statistical_csit_problem}
In this paper, we aim to design the precoder that maximizes the MCR$^2$ surrogate function in \eqref{eq: mcr2 def} under the power constraint \eqref{eq:power_constraint}. To this end, we define the feasible set of the concatenated precoder matrix as 
$
    \cV= \bigotimes_{k \in [K]} \cV_k,
$
where
\begin{equation}
    \cV_k =
    \left\{
    \BV_k \in \C^{N_{t, k} \times d_k}:
    \Tr(\BV_k\BSigma^{(kk)}\BV_k^H)\le N_{t, k} P_k
    \right\}.
\end{equation}
Then, the precoding optimization problem is formulated as
\begin{align}
   \cP_1: \max_{\BV\in\cV}\quad
    &
    \E\!\left[\Delta(\BV;\sigma^2, \BH)\right],
    \label{eq:statistical_csit_objective}
\end{align}
where the expectation is taken over the small-scale fading components \(\{\BX_k\}_{k=1}^{K}\). 
Problem~\eqref{eq:statistical_csit_objective} is challenging to solve for two reasons. First, the objective function involves an expectation over log-determinant terms, which makes direct evaluation computationally prohibitive. Second, the optimization problem is non-convex due to the difference of these log-determinant terms. To overcome these difficulties, in the following, we first leverage RMT to derive the closed-form deterministic approximation for the objective function, which is in closed-form. Then, we propose a block coordinate ascent (BCA) algorithm to optimize the precoder.
{ 
\section{Deterministic Approximation for the MCR$^2$ Objective }
\label{sec:deterministic_equivalent}
In this section, we first present a general result establishing the convergence of the resolvent for the covariance matrix of sums of Gaussian random matrices. Based on this theoretical result, we then obtain the desired deterministic approximation.
\subsection{Theoretical Results}
The surrogate objective $\Delta$  comprises a linear combination of log-determinant terms whose behavior is governed by the eigenvalue distribution of the channel matrix $\BH$. To study this distribution, a fundamental tool is to investigate the resolvent for $\BH$ \cite{hachem2007deterministic}. The following theorem provides the deterministic approximation for the resolvent of a broad class of random matrices that includes $\BH$ as a special case.
\begin{theorem}
\label{thm: convergence resolvent}
    Let $\{ \BZ_k\}_{k=1}^K$ be a collection of $K$ random matrices with size $N_r \times N_t$. For each $k \in [K]$, suppose $\BZ_k = \BA_k \BX_k \widetilde{\BA}_k$, where $\BA_k \in \C^{N_r \times L_k}$ and $\widetilde{\BA}_k \in \C^{M_k \times N_t}$ are deterministic, and $\BX_k \in \C^{L_k \times M_k}$ is a random matrix with i.i.d. $\cC\cN(0, M_k^{-1})$ elements.  Define $\BOmega_k = \BA_k\BA_k^H$ and $\widetilde{\BOmega}_k = \widetilde{\BA}_k^H \widetilde{\BA}_k$. We impose the following assumptions: 
    \begin{itemize}
        \item [i)] $K$ is a given constant. Let $d_k = L_k / N_r$, $\widetilde{d}_k = M_k / N_r $ for $k \in [K]$. Dimensions $L_k, M_k, N_{r}$, and $N_t$ approach infinity such that 
        $
        0 < \lim\inf \min_{k \in [K]} \{d_k, \widetilde{d}_k\} \leq \lim\sup \max_{k \in [K]} \{d_k, \widetilde{d}_k\} < \infty$ and $
        0 < \liminf N_r / N_t \leq \limsup N_r / N_t < \infty$.
        \item [ii)] $\limsup \max_{k \in [K]}\{ \lnorm{\BOmega_k}, \lnorm{\widetilde{\BOmega}_k} \} < \infty$.
        \item [iii)] $\liminf \min_{k \in [K]} \{ \frac{1}{N_r} \Tr \BOmega_k, \frac{1}{N_t} \Tr \widetilde{\BOmega}_k \} > 0$.
    \end{itemize}
    Let $\BZ = \sum_{k=1}^K \BZ_k$ and define its resolvent as
    \begin{align}
    \BQ(z) =  \left(-z\BI_{N_r} + \BZ\BZ^H\right)^{-1}, ~~ z < 0.
    \end{align}
    Then, for any deterministic matrix $\bcR \in \C^{N_r \times N_r}$ with uniformly bounded norm, i.e., $\limsup \lnorm{\bcR} < \infty$, we have 
    \begin{align}
    \label{eq: poly bound}
        \Tr \bcR\E \BQ(z) - \Tr \bcR \BTheta(z) = \cO\left( \frac{\mathsf{P}_5(|z|^{-1})}{N_r |z|^4}\right),
    \end{align}
    where $\mathsf{P}_5(|z|)$ is a polynomial of degree at most $5$ with positive coefficients and $\BTheta(z)$ is defined by the solution of the following self-consistent system of equations:
    \begin{subequations}
    \label{eq: de}
    \begin{align}
        \delta_k(z) &= \frac{1}{M_k} \Tr  \BOmega_k \BTheta(z), \quad k \in [K], \\
        \widetilde\delta_k(z) &= \frac{1}{M_k} \Tr \widetilde{\BOmega}_k \widetilde{\BTheta}(z), \quad k \in [K], \\
        \BTheta(z) &= -\frac{1}{z}\Bigg[\BI_{N_r} + \sum_{k \in [K]} \widetilde{\delta}_k(z)\BOmega_k \Bigg]^{-1}, \\
        \widetilde{\BTheta}(z) &= -\frac{1}{z} \Bigg[\BI_{N_t} + \sum_{k \in [K]} \delta_k(z) \widetilde{\BOmega}_k \Bigg]^{-1},
    \end{align}
    \end{subequations}
    with \(\delta_k(z)>0\) and \(\widetilde{\delta}_k(z)>0\) for all \(k\in[K]\) and \(z<0\).
\end{theorem}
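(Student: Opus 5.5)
The plan is the Gaussian method of Pastur (integration by parts plus the Nash--Poincaré variance inequality), in the form of Hachem et al.~\cite{hachem2007deterministic, hachem2008mimo} but with $K$ independent blocks. Write $\BZ=\sum_k \BA_k\BX_k\widetilde{\BA}_k$ and $\BQ=\BQ(z)$, and set $\BP = \E\BQ$. For $z<0$ we have $\lnorm{\BQ}\le |z|^{-1}$, and every quantity below is controlled by powers of $|z|^{-1}$. This is where the polynomial $\mathsf{P}_5(|z|^{-1})$ and the factor $|z|^{-4}$ come from.

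\textbf{Step 1: integration by parts.} Use the identity $\E[X_{k,pq}f(\BX)] = M_k^{-1}\E[\partial f/\partial \overline{X_{k,pq}}]$ and apply it to $\E[\BQ\BZ\BZ^H]$, using $-z\BQ + \BQ\BZ\BZ^H = \BI$. Because the blocks are independent, the derivative with respect to $\BX_k$ acts only through $\BZ_k$. This yields the perturbed equations
\begin{align*}
\E\BQ = -\frac{1}{z}\Big[\BI_{N_r} + \sum_k \widetilde{\alpha}_k \BOmega_k\Big]^{-1} + \BDelta ,
\end{align*}
where $\alpha_k = M_k^{-1}\Tr\BOmega_k\E\BQ$. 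The co-resolvent $\widetilde{\BQ} = (-z\BI_{N_t} + \BZ^H\BZ)^{-1}$ enters through $\widetilde{\alpha}_k = M_k^{-1}\Tr\widetilde{\BOmega}_k\E\widetilde{\BQ}$, and $\E\widetilde{\BQ}$ satisfies the symmetric equation with $\alpha_k$. The error $\BDelta$ collects covariance terms of the form $\E[(\alpha_k^{\circ})\,\BQ\BOmega_k\cdots]$, where $\alpha_k^{\circ} = M_k^{-1}\Tr\BOmega_k\BQ - \alpha_k$. The cross terms between different $k$ appear only through the sum structure of $\BZ$, and they are handled exactly as the $k$-th term is.

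\textbf{Step 2: variance control.} The Poincaré--Nash inequality gives $\Var(M_k^{-1}\Tr\BOmega_k\BQ)=\cO(N_r^{-2}|z|^{-4})$, up to polynomial factors in $|z|^{-1}$, by assumptions i)--ii). By Cauchy--Schwarz this makes $\Tr\bcR\BDelta = \cO(N_r^{-1}\mathsf{P}(|z|^{-1})|z|^{-4})$. Assumption iii) is used to lower-bound $\alpha_k$ and $\widetilde{\alpha}_k$ away from zero on compacts, so that the relevant inverses remain bounded.

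\textbf{Step 3: existence, uniqueness and positivity for \eqref{eq: de}.} View $(\delta_k,\widetilde{\delta}_k)$ as Stieltjes transforms of positive measures. Obtain existence by a fixed-point or limiting argument on $z<0$ using the monotone map $\delta\mapsto\Tr\BOmega_k\BTheta(\delta)/M_k$. Obtain uniqueness by subtracting two solutions and showing that the resulting linear system $(\BI - \BC)\,\mathbf{e}=0$ has a coefficient matrix $\BC$ with spectral radius $<1$. The positivity claims $\delta_k,\widetilde{\delta}_k>0$ follow directly from $z<0$.

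\textbf{Step 4: comparison, the main obstacle.} Subtract \eqref{eq: de} from the perturbed system of Step 1 and write $\BTheta - \E\BQ$ via the resolvent identity. This gives a $2K$-dimensional linear system for the errors $\alpha_k-\delta_k$ and $\widetilde{\alpha}_k-\widetilde{\delta}_k$, of the form $(\BI - \BC(z))\,\mathbf{e} = \mathbf{r}$, where $\mathbf{r}=\cO(N_r^{-1})$. The difficulty is to show that $\lnorm{(\BI-\BC)^{-1}}$ is bounded by a polynomial in $|z|^{-1}$ uniformly in $N_r$, now with $K$ coupled block pairs rather than one. I would first try to exhibit a positive vector $\mathbf{u}$ with $\BC\mathbf{u}\le(1-c|z|\,\cdot)\mathbf{u}$ componentwise. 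This comes from the identities $\delta_k = M_k^{-1}\Tr\BOmega_k\BTheta\,\BTheta^{-1}\ldots$, written in the form $\delta_k = -z\,(\cdot)\delta_k + (\text{positive term involving } |z|)$, following the Perron--Frobenius argument of Hachem et al. Assumption iii) guarantees that this slack is at least polynomial in $|z|$. Inverting then gives $\mathbf{e}=\cO(\mathsf{P}(|z|^{-1})N_r^{-1})$. Finally, plug $\mathbf{e}$ into $\Tr\bcR(\E\BQ-\BTheta)$, again using the resolvent identity, to obtain \eqref{eq: poly bound}.
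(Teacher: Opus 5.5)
Your architecture matches the paper's. There is a stochastic step by integration by parts and the Nash--Poincaré inequality. It is followed by a deterministic comparison through a $K$-dimensional linear system whose invertibility comes from a Perron--Frobenius-type positive vector. Two differences are cosmetic:
\begin{itemize}
\item The paper never uses the co-resolvent. Integration by parts on $\E[\BQ\BZ\BZ^H]$ directly produces $\BPsi$, built from $\widetilde{\BPsi}=-z^{-1}[\BI_{N_t}+\sum_k\underline{g}_k\widetilde{\BOmega}_k]^{-1}$ with $\underline{g}_k=\E\, M_k^{-1}\Tr\BOmega_k\BQ$. Your $\E\widetilde{\BQ}$ formulation therefore costs a second integration by parts.
\item The paper's positive vector is $\Bdelta'$, from $(\BI_K-z^2\BGamma_\delta\widetilde{\BGamma}_\delta)\Bdelta'=\Bv_\delta$. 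You use $\Bdelta$ itself, which also works.
\end{itemize}

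The gap is in Step 4. Subtracting the two systems with the resolvent identity does not give $\BI_K-z^2\BGamma_\delta\widetilde{\BGamma}_\delta$. It gives the mixed matrix $\BI_K-z^2\BGamma_{g,\delta}\widetilde{\BGamma}_{g,\delta}$, with $[\BGamma_{g,\delta}]_{k,l}=M_k^{-1}\Tr\BOmega_k\BPsi\BOmega_l\BTheta$. In your notation, $\BR=-z^{-1}[\BI_{N_r}+\sum_k\widetilde{\alpha}_k\BOmega_k]^{-1}$ replaces $\BPsi$. One factor comes from the random side and one from $\BTheta$. This matrix is not nonnegative a priori, and a sub-invariant vector built from the $\delta$-identities says nothing about it. Three ingredients are missing:
\begin{itemize}
\item[(a)] You need the Cauchy--Schwarz bound $|[\BGamma_{g,\delta}\widetilde{\BGamma}_{g,\delta}]_{k,l}|\le([\BGamma_\delta\widetilde{\BGamma}_\delta]_{k,l}[\BGamma_g\widetilde{\BGamma}_g]_{k,l})^{1/2}$. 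This must be combined with a spectral-radius bound for such entrywise geometric means, which the paper obtains from \cite{zhuang2025no}.
\item[(b)] You need a second positive-vector argument giving $\rho(z^2\BGamma_g\widetilde{\BGamma}_g)<1$ with polynomial slack. It comes from the perturbed identity $(\BI_K-z^2\BGamma_g\widetilde{\BGamma}_g)\underline{\Bg}'=\Bv_g+\mathbf{1}_K\cO(N_r^{-2}q_z)$, where $q_z=|z|^{-4}+|z|^{-5}$. This works only where the error is dominated by the lower bound $[\Bv_g]_k\ge C^{-1}(|z|+C)^{-2}$ supplied by assumption iii). That region is $\{z<0:\ N_r^{-2}q_z\le C'(|z|+C)^{-2}\}$.
\item[(c)] On the complement, which for large $N_r$ contains only small $|z|$, you use the trivial bound $|\Tr\bcR(\E\BQ-\BTheta)|\le 2N_r\lnorm{\bcR}/|z|$. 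There it is already $\cO(N_r^{-1}q_z/|z|)$.
\end{itemize}
Without (a)--(c) you invert the wrong matrix and get no estimate of the claimed form on all of $z<0$. This is also the real role of assumption iii). It is not needed to keep inverses bounded on compacts, since for $z<0$ they are bounded by $|z|^{-1}$ automatically. Its job is to supply the explicit lower bound above on the whole half-line.

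There is also a rate slip. The components of $\mathbf{r}$ and $\mathbf{e}$ are normalized traces, and you need them at order $N_r^{-2}$, not $N_r^{-1}$. Step 2 delivers this if you apply it with $\bcR=\BOmega_k$ and divide by $M_k$. With only $\mathbf{e}=\cO(N_r^{-1})$, the substitution $\Tr\bcR(\BPsi-\BTheta)=z\sum_l(\widetilde{g}_l-\widetilde{\delta}_l)\Tr\bcR\BPsi\BOmega_l\BTheta$ yields merely $\cO(1)$. This is because $\Tr\bcR\BPsi\BOmega_l\BTheta$ is of order $N_r$.
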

\textit{Proof:} The proof of Theorem \ref{thm: convergence resolvent} is given in Appendix \ref{app: convergence resolvent}. \QED
\par
We note that the positive solution to the fixed-point equations \eqref{eq: de} can be proved to exist and be unique. A numerical procedure for computing this solution is summarized in Algorithm \ref{alg:de fixed-point}. We provide the following technical remarks for a more comprehensive understanding of the theoretical results in Theorem~\ref{thm: convergence resolvent}.
\begin{remark}
    The polynomial bound in \eqref{eq: poly bound} is derived to ensure that this error is integrable over an unbounded region of $z < 0$.
\end{remark}
\begin{remark}
    Theorem~\ref{thm: convergence resolvent} can be used for approximating the linear functional of the eigenvalues of $\BZ\BZ^H$. Denote the ordered eigenvalues of $\BZ\BZ^H$ as $0 \leq \lambda_1 \leq \lambda_2 \leq \cdots \leq \lambda_{N_r}$ and the empirical spectral distribution as 
   $
    F^{\BZ\BZ^H}(x) = \frac{1}{N_r}\sum_{i=1}^{N_r} \mathbb{I}_{\{\lambda_i \leq x\}}.
    $
    According to the properties of the Stieltjes transform, there exists a probability measure $\overline{F}$ such that
    $
        \frac{1}{N_r} \Tr \boldsymbol{\Theta}(z) = \int \frac{\overline{F}(dx)}{x - z}.
    $
    Using Theorem \ref{thm: convergence resolvent}, one can prove that
    \begin{align}
        \int_{\R} f(x) \left[ F^{\BZ\BZ^H}(dx) - \overline{F}(dx) \right] \xrightarrow[]{a.s.} 0,
    \end{align}
    for any bounded continuous function $f$. In particular, by setting $f(x) = \log(1 + x/\sigma^2)$, we obtain the approximation for the MCR$^2$ utility.
\end{remark}

\begin{algorithm}[t]
\caption{Fixed-Point Algorithm for Solving \eqref{eq: de}}
\label{alg:de fixed-point}
\begin{algorithmic}[1]
\Require  parameter $z < 0$, correlation matrices $\{\BA_k\}_{k=1}^K$ and $\{\widetilde{\mathbf A}_k\}_{k=1}^K$, initial points $\delta_{k}^{(0)} > 0$ and $\widetilde{\delta}_k^{(0)} > 0$, for $k \in [K]$, tolerance $\epsilon$.
\Ensure Solutions $\{\delta_{k}\}_{k=1}^K$ and $\{\widetilde{\delta}_k\}_{k=1}^K$. 
\State Set \(t\gets 1\) 
\Repeat
    \State  Set $\BTheta^{(t)} \gets  -\frac{1}{z} (\BI_{N_r} + \sum_{k \in [K]} \widetilde{\delta}_k^{(t-1)} \BOmega_k)^{-1}$.
    \State Set $\widetilde{\BTheta}^{(t)} \gets  -\frac{1}{z} (\BI_{N_t} + \sum_{k \in [K]}\delta_k^{(t-1)} \widetilde{\BOmega}_k)^{-1}$.
    \For{\(k=1,\ldots,K\)}
        \State Compute $\delta^{(t)}_k \gets \frac{1}{M_k} \Tr \BOmega_k \BTheta^{(t)}$.
        \State Compute $\widetilde{\delta}^{(t)}_k \gets \frac{1}{M_k} \Tr \widetilde{\BOmega}_k \widetilde{\BTheta}^{(t)}$.
    \EndFor
    \State Set $\epsilon_t \gets \max_{k \in [K]}|\delta_k^{(t)} - \delta_k^{(t-1)}|$  and \(t\leftarrow t+1\).
\Until{ $\epsilon_t \leq \epsilon$.}
\State Return $\{\delta_{k}^{(t-1)}\}_{k=1}^K$ and $\{\widetilde{\delta}_k^{(t-1)}\}_{k=1}^K$.
\end{algorithmic}
\end{algorithm}

\subsection{Approximation for the MCR$^2$ utility}

In this section, we apply Theorem 1 to obtain the deterministic approximation for the optimization objective in \eqref{eq:statistical_csit_objective}.
To this end, we introduce 
\begin{align}
    \bcT_k = \blkdiag(\mathbf{0}_{N_{t, 1} \times N_{t, 1}}, \ldots, \BT_k, \ldots, \mathbf{0}_{N_{t, K}\times N_{t, K}}),
\end{align}
where the only non-zero block is the $k$-th diagonal block, and all other blocks are zero matrices.
To facilitate the analysis, we make the following assumptions.
\begin{assumption}
\label{ass: dim}
 The numbers \(K\) and $J$ are given. Dimensions $N_r$ and $N_{t, k}$ approach infinity with
    \[
        0<\liminf_{N_r}  \frac{N_{t,k}}{N_r}
        \le
        \limsup_{N_r}  \frac{N_{t,k}}{N_r}<\infty, ~~ k \in [K].
    \]
\end{assumption}
\begin{assumption}
\label{ass: sp norm}
     The spectral norms of $\BB_k$, $\BT_k$, and $\BP_{\bcC}$ are uniformly bounded, i.e., 
    \begin{align}
        &\limsup_{N_r}  \max_{k \in [K]} \left\{ \| \BB_k \|, \|\BT_k\| \right\} < \infty, \\
        &\limsup_{N_r} \max_{\bcC \in \widetilde{\Sigma} } \| \BP_{\bcC}\|< \infty. 
    \end{align}
\end{assumption}
\begin{assumption}
\label{ass: trace}
     The normalized traces of the correlation matrices are uniformly bounded away from $0$. In particular, 
    \begin{align}
        &\liminf_{N_r} \min_{k \in [K]} \frac{1}{N_r} \mathrm{Tr} \BB_k > 0, \\
        &\liminf_{N_r} \min_{\bcC \in \widetilde{\Sigma}, k \in [K]}\frac{1}{N_r} \Tr \bcT_k \BP_{\bcC} > 0.
    \end{align} 
\end{assumption}
Assumption \ref{ass: dim} is the high-dimensional system assumption, i.e., the number of antennas goes to infinity at the same ratio, which is common in the study of RMT \cite{couillet2011book}. In the following, for notational convenience, we use $N_r \to \infty$ to denote this asymptotic regime. Assumptions \ref{ass: sp norm} and \ref{ass: trace} ensure that the spectral norm of the spatial correlation matrices is bounded and prevent antenna imbalance such that the rank of the spatial correlation matrix is very small.
\par
The following theorem provides the deterministic approximation for the terms in the MCR$^2$ utility.
\begin{theorem}
\label{thm: EMI approx}
Assume Assumptions \ref{ass: dim}-\ref{ass: trace} hold. Then, we have, for $\bcC \in \widetilde{\Sigma}$,
\begin{align}
\E \left[I_{\bcC}(\BV; \sigma^2, \BH) \right] &= \overline{I}_{\bcC}(\BV; \sigma^2) + \cO(N_r^{-1})
\end{align}
where 
\begin{align*}
    \overline{I}_{\bcC}(\BV; \sigma^2) &= \log\det\Big(\BI_{N_r} + \sum_{k \in [K]} \widetilde{\delta}_{\bcC, k}(-\sigma^2) \BB_k \Big) \\
    & + \log\det\Big( \BI_{N_t} +  \BD_{\bcC, \delta}(-\sigma^2) \BP_{\bcC} \Big) \\
    &- \sum_{k \in [K]} \sigma^2\delta_{\bcC, k}(- \sigma^2) \widetilde{\delta}_{\bcC, k}(-\sigma^2), \numberthis
\end{align*}
where  $\BD_{\bcC, \delta}(-\sigma^2) = \blkdiag(\delta_{\bcC, k}(-\sigma^2) \BT_k; k \in [K])$ and $\bcC \in \widetilde{\Sigma}$. Here, $\{ \delta_{\bcC, k}(-\sigma^2), \widetilde{\delta}_{\bcC, k}(-\sigma^2) \}_{k=1}^K$ is the positive solution for the following system of equations 
\begin{subequations}
\label{eq: de MCR2 equation}
\begin{align}
    \delta_{\bcC, k}(-\sigma^2) = \frac{1}{\sigma^2 N_{t, k}} \Tr \BB_k \Big( \BI_{N_r} + \sum_{l \in [K]} \widetilde{\delta}_{\bcC, l}(-\sigma^2) \BB_{l} \Big)^{-1}, \\
    \widetilde{\delta}_{\bcC, k}(-\sigma^2) = \frac{1}{\sigma^2 N_{t, k}} \Tr \bcT_k \BP_{\bcC}\Big( \BI_{N_t} + \BD_{\bcC, \delta}(-\sigma^2) \BP_{\bcC} \Big)^{-1},
\end{align}
\end{subequations}
for $k \in [K]$ and $\bcC \in \widetilde{\Sigma}$.
\end{theorem}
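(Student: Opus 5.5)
Our approach is to write each term $I_{\bcC}$ as an instance of Theorem~\ref{thm: convergence resolvent} and then integrate the resolvent approximation over $z$. Note $\BP_{\bcC}$ is not defined in the displayed text; we take it to be $\BP_{\bcC}=\BV\bcC\BV^H$, the effective transmit covariance, which is consistent with the block-diagonal $\BD_{\bcC,\delta}$ and with $\bcT_k$ selecting the $k$-th block.

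\textbf{Step 1: reduction to Theorem~\ref{thm: convergence resolvent}.} Let $\BP_{\bcC}^{1/2}$ be a square root of $\BP_{\bcC}$. Then
\begin{align*}
\BH\BP_{\bcC}^{1/2}=\sum_{k\in[K]} \BB_k^{1/2}\BX_k \big[\mathbf{0},\ldots,\BT_k^{1/2},\ldots,\mathbf{0}\big]\BP_{\bcC}^{1/2}.
\end{align*}
Set $\BA_k=\BB_k^{1/2}$ and let $\widetilde{\BA}_k$ be the $k$-th block row $\BT_k^{1/2}$ (embedded) times $\BP_{\bcC}^{1/2}$. Then $L_k=N_r$ and $M_k=N_{t,k}$, with
\begin{align*}
\BOmega_k=\BB_k,\qquad \widetilde{\BOmega}_k=\BP_{\bcC}^{1/2}\bcT_k\BP_{\bcC}^{1/2}.
\end{align*}
Assumptions~\ref{ass: dim}--\ref{ass: trace} give items i)--iii) of Theorem~\ref{thm: convergence resolvent}. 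The trace condition uses $\Tr \widetilde{\BOmega}_k=\Tr\bcT_k\BP_{\bcC}$, and the norm bound uses $\|\widetilde{\BOmega}_k\|\le\|\BT_k\|\|\BP_{\bcC}\|$.

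\textbf{Step 2: matching the fixed-point system.} Evaluate \eqref{eq: de} at $z=-\sigma^2$. The first equation directly gives the $\delta_{\bcC,k}$ equation in \eqref{eq: de MCR2 equation}. For the second, write
\begin{align*}
\Tr \widetilde{\BOmega}_k(\BI+\textstyle\sum_l\delta_l\widetilde{\BOmega}_l)^{-1}=\Tr\bcT_k\BP_{\bcC}(\BI+\BD_{\bcC,\delta}\BP_{\bcC})^{-1}.
\end{align*}
This identity follows from push-through, $\BP^{1/2}(\BI+\BP^{1/2}\BD\BP^{1/2})^{-1}\BP^{1/2}=\BP(\BI+\BD\BP)^{-1}$, together with $\sum_l\delta_l\bcT_l=\BD_{\bcC,\delta}$. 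Sylvester's identity likewise turns $\log\det(\BI+\BD\BP^{1/2}\ldots)$ into the form $\log\det(\BI_{N_t}+\BD_{\bcC,\delta}\BP_{\bcC})$.

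\textbf{Step 3: integration.} Use
\begin{align*}
I_{\bcC}=\int_{\sigma^2}^{\infty}\Big(\frac{N_r}{\omega}-\Tr\BQ(-\omega)\Big)d\omega,
\end{align*}
take expectations, and replace $\Tr\E\BQ$ by $\Tr\BTheta$ using Theorem~\ref{thm: convergence resolvent} with $\bcR=\BI$. The error is $\int_{\sigma^2}^\infty \mathsf{P}_5(\omega^{-1})\omega^{-4}d\omega$, which is finite since $\sigma^2>0$. Because the trace itself carries $N_r$ terms, the theorem's $\cO(1/N_r)$ bound should be read as applying to the normalized trace. The paper's claimed $\cO(N_r^{-1})$ for the unnormalized $\E I_{\bcC}$ then requires the sharper bound $\Tr\E\BQ-\Tr\BTheta=\cO(N_r^{-1})$; we interpret \eqref{eq: poly bound} in that sense, as in Hachem et al. (whose proof uses variance bounds on quadratic forms, which are $\cO(N^{-2})$ after the Gaussian integration-by-parts cancellation).

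\textbf{Step 4: closed form for the integral.} We must show that
\begin{align*}
\overline{I}(\sigma^2)=\int_{\sigma^2}^\infty\Big(\frac{N_r}{\omega}-\Tr\BTheta(-\omega)\Big)d\omega,
\end{align*}
with $\overline{I}$ the claimed expression. We do this by differentiating the claimed $\overline{I}$ in $\sigma^2$. By the fixed-point equations, the partial derivatives of $\overline{I}$ with respect to each $\delta_k$ and $\widetilde\delta_k$ vanish (a stationarity/envelope argument). Only the explicit $\sigma^2$ dependence remains, namely $-\sum_k\delta_k\widetilde\delta_k$, which must be reconciled with $\frac{N_r}{\sigma^2}-\Tr\BTheta(-\sigma^2)$. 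This uses the identity
\begin{align*}
N_r-\sigma^2\Tr\BTheta=\textstyle\sum_k\widetilde\delta_k\Tr\BB_k\sigma^2\BTheta=\sigma^2\sum_k N_{t,k}\widetilde\delta_k\delta_k,
\end{align*}
up to the paper's scaling convention: $\delta$ as defined here equals $\frac{1}{\sigma^2 N_{t,k}}\Tr\ldots$, so the matching holds. Finally, $\overline{I}\to0$ as $\sigma^2\to\infty$, since $\delta,\widetilde\delta=\cO(\sigma^{-2})$.

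The main obstacle is Step 4's careful scaling bookkeeping together with Step 3's rate. The remaining substantive work is verifying differentiability of the $\delta$'s in $\sigma^2$, which follows from an implicit function argument using uniqueness of the positive solution.
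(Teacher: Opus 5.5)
Your route is the paper's route. You use the same embedding $\BA_k=\BB_k^{1/2}$, $\widetilde{\BA}_k=\BPi_k\bcT_k^{1/2}\BP_{\bcC}^{1/2}$ into Theorem~\ref{thm: convergence resolvent}, and the same representation $I_{\bcC}=\int_{\sigma^2}^{\infty}(N_r/x-\Tr\BQ(-x))\,dx$. You then apply Theorem~\ref{thm: convergence resolvent} with $\bcR=\BI_{N_r}$ plus Fubini, and close with an envelope argument to identify $\int(N_r/x-\Tr\BTheta(-x))\,dx$ with $\overline{I}_{\bcC}$. Your hedging in Step~3 is unnecessary. Theorem~\ref{thm: convergence resolvent} already bounds the \emph{unnormalized} trace difference by $\cO(\mathsf{P}_5(|z|^{-1})/(N_r|z|^4))$: its proof has $\Var(g_k)=\cO(N_r^{-2})$ and $\Var(h_k)=\cO(1)$. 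So no reinterpretation is needed there.

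The step that does not go through as written is Step~4, and ``up to the paper's scaling convention'' is exactly where it fails. With the fixed-point equations normalized by $1/(\sigma^2N_{t,k})$, differentiating the displayed $\overline{I}_{\bcC}$ gives
\[
\frac{\partial \overline{I}_{\bcC}}{\partial \widetilde{\delta}_{\bcC,l}}=\Tr\BB_l\Big(\BI_{N_r}+\sum_{k}\widetilde{\delta}_{\bcC,k}\BB_k\Big)^{-1}-\sigma^2\delta_{\bcC,l}=\sigma^2(N_{t,l}-1)\,\delta_{\bcC,l},
\]
and likewise $\partial\overline{I}_{\bcC}/\partial\delta_{\bcC,l}=\sigma^2(N_{t,l}-1)\widetilde{\delta}_{\bcC,l}$. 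So stationarity fails. Moreover, the explicit derivative $-\sum_k\delta_{\bcC,k}\widetilde{\delta}_{\bcC,k}$ cannot match $N_r/\sigma^2-\Tr\BTheta(-\sigma^2)=\sum_kN_{t,k}\delta_{\bcC,k}\widetilde{\delta}_{\bcC,k}$, which you computed correctly yourself.

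Both the envelope argument and the matching work exactly when the coupling term is $-\sigma^2\sum_kN_{t,k}\delta_{\bcC,k}\widetilde{\delta}_{\bcC,k}$. This is the Hachem et al.\ form $-\sigma^2 n\delta\widetilde{\delta}$. Under Assumptions~\ref{ass: sp norm}--\ref{ass: trace}, $\delta_{\bcC,k}$ and $\widetilde{\delta}_{\bcC,k}$ are of order one. The literal and corrected formulas therefore differ by $\Theta(N_r)$, which the $\cO(N_r^{-1})$ error cannot absorb. You should state and prove the formula with the $N_{t,k}$ factor rather than assert that the matching holds.

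The paper's proof contains the same slip in its displayed $\partial f_{\bcC}/\partial\widetilde{\delta}_{\bcC,l}=\Tr\BB_l(\cdot)^{-1}-x\delta_{\bcC,l}=0$. However, its final identity $-\partial\overline{I}_{\bcC}/\partial x=\sum_kN_{t,k}\delta_{\bcC,k}\widetilde{\delta}_{\bcC,k}$ shows that the corrected coupling term is what is actually being established.
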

}
\textit{Proof:} The proof for Theorem \ref{thm: EMI approx} is given in Appendix \ref{app: proof of thm EMI approx}. \QED
\par
By Theorem~\ref{thm: EMI approx} and Assumption~\ref{ass: dim} that the number of classes $J$ is given, we immediately obtain the deterministic approximation for the MCR$^2$ utility with the corresponding convergence rate 
\begin{align}
\E[\Delta(\BV; \sigma^2, \BH)] =  \overline{\Delta}(\BV; \sigma^2)+ \cO(N_r^{-1}), \label{eq: de MI} 
\end{align}
where 
$
    \overline{\Delta}(\BV; \sigma^2) = \overline{I}_{\BSigma}(\BV; \sigma^2) - \sum_{j=1}^J p_j \overline{I}_{\BSigma_j}(\BV; \sigma^2)
$.
\begin{remark}
    Theorem \ref{thm: EMI approx} has wide applicability in the analysis of MIMO systems. For example, setting $K = 1$, Theorem \ref{thm: EMI approx} reduces to \cite[Theorem 1]{hachem2008mimo} and can be used to approximate the ergodic mutual information of point-to-point MIMO systems. For a general $K$, Theorem~\ref{thm: EMI approx} has potential applications in distributed or decentralized MIMO systems \cite{zhuang2025DBP, Junzhang2013}.
\end{remark}

\section{Statistical-CSIT Semantic Precoder Design}
\label{sec:optimization}

In this section, we optimize the precoder by utilizing the deterministic approximation in Theorem~\ref{thm: EMI approx}. With \(\overline{\Delta}(\BV;\sigma^2)\) defined in \eqref{eq: de MI}, the optimization problem $\cP_1$ in \eqref{eq:statistical_csit_objective} can be formulated as
\begin{equation}
    \begin{aligned}
    \cP_2:& \quad
    \max_{\BV  \in \cV}
    \overline{\Delta}(\BV;\sigma^2).
    \end{aligned}
    \label{eq:fp-semantic-optimization-objective}
\end{equation}
However, the above problem is still non-convex because \(\overline{\Delta}\) involves a weighted difference of log determinant terms and each \(\overline{I}_{\bcC}\) depends on $\BV$ implicitly through the fixed-point equation \eqref{eq: de MCR2 equation}. Fortunately, the derivative of the objective function with respect to \(\BV\) can be given in a closed form. Therefore, together with the product structure of the feasible set $\cV$, problem \(\cP_2\) can be solved by the BCA method.  In particular, at iteration $t$, for each block $k \in [K]$, we solve the following subproblem
\begin{align}
     \max_{\BV_k \in \cV_k} \overline{\Delta}(\BV_1^{(t)},  \ldots, \BV_{k-1}^{(t)}, \BV_k, \BV_{k+1}^{(t-1)}, \ldots; \sigma^2)
     \label{eq: BCA-SCA}
\end{align}
using the SCA and line-search method. Here, we write $\overline{\Delta}(\BV; \sigma^2) = \overline{\Delta}(\BV_1, \BV_2, \ldots, \BV_K; \sigma^2)$.
\subsection{BCA-SCA Method}


To start with, we first compute the derivative of the objective function with respect to $\BV$. Under the Wirtinger convention
$
    \mathrm{d}f(\Bx)
    =
    2\operatorname{Re}\Tr\!
    \left(\nabla_{\Bx^*}f(\Bx)\right)^H\mathrm{d}\Bx
$ and using $\partial \log\det (\BX) = \Tr \BX^{-1}\partial \BX$, the gradient for $\overline{I}_{\bcC}$ can be given by
\begin{align}
    \nabla_{\BV^*}\overline{I}_{\bcC}(\BV;\sigma^2)
    =
    \widetilde{\BTheta}_{\bcC}\BD_{\bcC,\delta}\BV\bcC,
    \label{eq:envelope-p-gradient}
\end{align}
where \(\widetilde{\BTheta}_{\bcC}=(\BI_{N_t}+\BD_{\bcC,\delta}\BP_{\bcC}(\BV))^{-1}\), and the dependence of \(\BD_{\bcC,\delta}\) on \(\sigma^2\) and \(\BV\) is omitted for brevity. Hence, by the definition in \eqref{eq: de MCR2 equation}, we have
\begin{equation}
    \nabla_{\BV^*}\overline{\Delta}(\BV;\sigma^2)
    =
    \widetilde{\BTheta}_{\BSigma}\BD_{\BSigma,\delta}\BV\BSigma
    -
    \sum_{j=1}^{J}p_j
    \widetilde{\BTheta}_{\BSigma_j}\BD_{\BSigma_j,\delta}\BV\BSigma_j .
    \label{eq:fp-v-gradient}
\end{equation}
The quantity in \eqref{eq:fp-v-gradient} is not used to perform an unconstrained joint update due to the block diagonal structure of $\BV$. Instead, it supplies the local ascent direction. 
In particular, the gradient with respect to $\BV_k$ is given by
\begin{align}
\BG_k(\BV) = \nabla_{\BV_k^*} \overline{\Delta}(\BV;\sigma^2) = [\nabla_{\BV^*}\overline{\Delta}(\BV;\sigma^2)]_{[\Btau_k, \Beta_k]},
\label{eq: derivative iter Vk}
\end{align}
where the index vectors $\Btau_k$ and $\Beta_k$ are defined as
\begin{align*}
    \Btau_k &= \Bigg[
        \sum_{l \in [k-1]} N_{t, l} + 1, \sum_{l \in [k-1]} N_{t, l} + 2, \ldots,\sum_{l \in [k]} N_{t, l}
    \Bigg]^\top, \\
    \Beta_k &= \Bigg[
        \sum_{l \in [k-1]} d_{l} + 1, \sum_{l \in [k-1]} d_{l} + 2, \ldots,\sum_{l \in [k]} d_l
    \Bigg]^\top. \numberthis
\end{align*}
By \eqref{eq: derivative iter Vk}, at the current point $\BV = \blkdiag(\BV_l; l \in [K]) \in \C^{N_t \times d}$, for each block $k \in [K]$, we can derive the first-order proximal SCA surrogate for the objective in \eqref{eq: BCA-SCA} as 
\begin{equation}
    \begin{aligned}
    \widetilde{\overline{\Delta}}_k(\BF_k|\BV,\mu)
    &=
    \rm{Constant}\\
    &+2\operatorname{Re}\Tr\!\left[
    \BG_k(\BV)^H
    \left(\BF_k-\BV_{k}\right)
    \right]\\
    &-\frac{1}{\mu}\|\BF_k-\BV_{{k}}\|_F^2,
    \end{aligned}
    \label{eq:sca-local-model}
\end{equation}
where \(\mu>0\) controls the size of the local trust region. Hence, at iteration $t$, for block $k \in [K]$, the corresponding SCA subproblem can be formulated as 
\begin{equation}
    \begin{aligned}
 \cP^t_{2, k}: \max_{\BF_k \in \cV_k}\quad
    &\widetilde{\overline{\Delta}}_k(\BF_k|\BV,\mu) \big|_{\BV = \BV^{(t, k)}},
    \end{aligned}
    \label{eq:sca-block-subproblem}
\end{equation}
where we define 
\begin{align}
\BV^{(t, k)} = \blkdiag(\BV_1^{(t)}, \ldots, \BV^{(t)}_{k-1}, \BV_k^{(t-1)}, \ldots, \BV^{(t-1)}_K).
\end{align}
One can verify that $\cP_{2,k}^t$ is convex. The following lemma demonstrates its optimal solution.
\begin{lemma}
\label{lem: opt solution SCA}
For a given feasible precoder \(\mathbf{V}\) and stepsize \(\mu>0\), define
$
    \mathbf{Z}_k
    =
    \mathbf{V}_k+\mu\mathbf{G}_k(\mathbf{V}).
    \label{eq:sca-projection-center}
$
The optimal solution of
$
    \max_{\mathbf{F}_k\in\mathcal{V}_k}
    \widetilde{\overline{\Delta}}_k(\mathbf{F}_k|\mathbf{V},\mu)
$
is given by
\begin{equation}
    \mathbf{F}_k^{\star}
    =
    \mathbf{Z}_k
    \left(
        \mathbf{I}_{d_k}
        +
        \lambda_k\boldsymbol{\Sigma}^{(kk)}
    \right)^{-1},
    \label{eq:sca-projection-solution}
\end{equation}
where \(\lambda_k\ge0\) satisfies
\begin{equation}
    \lambda_k
    \left(
        \operatorname{Tr}
        \left[
            \mathbf{F}_k^{\star}
            \boldsymbol{\Sigma}^{(kk)}
            (\mathbf{F}_k^{\star})^H
        \right]
        -
        N_{t,k}P_k
    \right)
    =
    0.
    \label{eq:sca-projection-complementary}
\end{equation}
If \(\mathbf{Z}_k\) is feasible, then \(\lambda_k=0\) and \(\mathbf{F}_k^{\star}=\mathbf{Z}_k\). Otherwise, \(\lambda_k>0\) is chosen such that
\begin{equation}
    \operatorname{Tr}
    \left[
        \mathbf{F}_k^{\star}
        \boldsymbol{\Sigma}^{(kk)}
        (\mathbf{F}_k^{\star})^H
    \right]
    =
    N_{t,k}P_k.
    \label{eq:sca-projection-active}
\end{equation}
\end{lemma}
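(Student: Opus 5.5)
The plan is to rewrite the SCA subproblem $\cP^t_{2,k}$ as a Euclidean projection onto an ellipsoid-type set and then read off its KKT conditions. Expanding the surrogate \eqref{eq:sca-local-model} and completing the square gives
\begin{align*}
2\operatorname{Re}\Tr\!\left[\BG_k^H(\BF_k-\BV_k)\right]-\frac{1}{\mu}\|\BF_k-\BV_k\|_F^2
= -\frac{1}{\mu}\|\BF_k-\BZ_k\|_F^2 + \mu\|\BG_k\|_F^2,
\end{align*}
with $\BZ_k=\BV_k+\mu\BG_k(\BV)$. Maximizing the surrogate over $\cV_k$ is therefore equivalent to minimizing $\|\BF_k-\BZ_k\|_F^2$ subject to $\Tr(\BF_k\BSigma^{(kk)}\BF_k^H)\le N_{t,k}P_k$.

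This problem has a strictly convex objective and a convex quadratic constraint, since $\BSigma^{(kk)}\succeq 0$. Its feasible set contains $\BF_k=\vzero$ strictly, because $P_k>0$, so Slater's condition holds. Hence the KKT conditions are necessary and sufficient, and the minimizer is unique.

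Next I form the Lagrangian
\begin{align*}
\|\BF_k-\BZ_k\|_F^2+\lambda_k\left(\Tr(\BF_k\BSigma^{(kk)}\BF_k^H)-N_{t,k}P_k\right).
\end{align*}
Setting its Wirtinger derivative with respect to $\BF_k^*$ to zero gives $\BF_k-\BZ_k+\lambda_k\BF_k\BSigma^{(kk)}=\vzero$, that is, $\BF_k(\BI_{d_k}+\lambda_k\BSigma^{(kk)})=\BZ_k$. Because $\lambda_k\ge0$ and $\BSigma^{(kk)}\succeq0$, the matrix $\BI_{d_k}+\lambda_k\BSigma^{(kk)}$ is positive definite and hence invertible, which yields \eqref{eq:sca-projection-solution}. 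Complementary slackness is exactly \eqref{eq:sca-projection-complementary}.

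For the case split, suppose first that $\BZ_k$ is feasible. Then $\lambda_k=0$ and $\BF_k^\star=\BZ_k$ satisfy all KKT conditions, and the projection of a feasible point is the point itself. Otherwise $\lambda_k=0$ would force $\BF_k^\star=\BZ_k$, which is infeasible, so $\lambda_k>0$ and the constraint must be active, giving \eqref{eq:sca-projection-active}.

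The only nontrivial point is existence of such a $\lambda_k>0$. To handle it, I eigendecompose $\BSigma^{(kk)}=\BU\BLambda\BU^H$ and write
\begin{align*}
g(\lambda)=\Tr\!\left[\BZ_k\BU(\BI+\lambda\BLambda)^{-1}\BLambda(\BI+\lambda\BLambda)^{-1}\BU^H\BZ_k^H\right]
=\sum_i \frac{s_i\,\lambda_i}{(1+\lambda\lambda_i)^2},
\end{align*}
where $\lambda_i$ are the eigenvalues of $\BSigma^{(kk)}$ and $s_i\ge0$ are the squared column norms of $\BZ_k\BU$. This function is continuous and nonincreasing on $[0,\infty)$, with $g(0)>N_{t,k}P_k$ in the infeasible case and $g(\lambda)\to0$ as $\lambda\to\infty$. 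By the intermediate value theorem there is a $\lambda_k>0$ with $g(\lambda_k)=N_{t,k}P_k$, and it can be found by bisection. Uniqueness of the minimizer then confirms that this KKT point is the optimal solution.
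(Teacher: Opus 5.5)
Your proposal is correct and follows the route the paper indicates: complete the square to turn the SCA subproblem into a Euclidean projection of $\mathbf{Z}_k$ onto the weighted-power constraint set, then read off the KKT conditions, which are valid under Slater's condition. The paper omits these details; your monotonicity and intermediate-value argument for $\lambda_k>0$ fills them in. Strictly, existence of that multiplier already follows from KKT necessity, so the argument is redundant for existence, but it does justify computing $\lambda_k$ by bisection.
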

\textit{Proof:}
The optimal solution can be obtained by completing the square in \eqref{eq:sca-local-model} and applying the KKT conditions. The details are omitted for brevity.  \QED
\par
According to Lemma \ref{lem: opt solution SCA}, at iteration $t$, $\BV_k^{(t)}$ for $k \in [K]$ can be obtained as
\begin{align}
    \BV_k^{(t)} = \arg\max_{\BF_k \in \cV_k}\quad
    &\widetilde{\overline{\Delta}}_k(\BF_k|\BV,\mu) \big|_{\BV = \BV^{(t, k)}}. \label{eq: local update Vk}
\end{align}
Note that the above step is generated from a local approximation, while the objective in \(\mathcal{P}_2\) is non-convex. Hence, the candidate block is accepted after evaluating the original objective. Let \(\BV_{+}^{(t, k)}\) be the precoder obtained by replacing only block $\BV_k^{(t)}$ with \(\BV_k^{(t+1)}\). The backtracking line search chooses \(\mu=\mu_0,\tau\mu_0,\tau^2\mu_0,\ldots\), with \(\tau\in(0,1)\), until
\begin{equation}
    \overline{\Delta}(\BV_{+}^{(t, k)};\sigma^2)
    \ge
    \overline{\Delta}(\BV^{(t, k)};\sigma^2)
    +
    \frac{c_{\rm ls}}{\mu}
    \|\BV_k^{(t)}-\BV_k^{(t-1)}\|_F^2,
    \label{eq:line-search-condition}
\end{equation}
where \(c_{\rm ls}\ge0\) represents the line-search constant.  The detailed procedure for solving $\cP_2$ is summarized in Algorithm \ref{alg:statistical-rmt-mcr2}.

\begin{algorithm}[t]
\caption{BCA-SCA Algorithm for MCR$^2$ Utility Maximization by Optimizing Precoder Matrix $\BV$}
\label{alg:statistical-rmt-mcr2}
\begin{algorithmic}[1]
\Require Channel statistics \(\{\BB_k,\BT_k\}_{k=1}^{K}\), feature statistics \(\{\BSigma,\BSigma_j,p_j\}_{j=1}^{J}\), parameter $\sigma^2$,  initial feasible precoder \(\BV^{(0)}\), initial stepsize \(\mu_0\), minimum allowed stepsize $\mu_{\rm m}$, backtracking factor \(\tau\in(0, 1)\), line-search constant \(c_{\rm ls}\ge0\), tolerance \(\epsilon_{\rm opt}\)
\Ensure A feasible precoder \(\BV\)
\State Set \(t\gets 1\) and $\BV_{\rm cur} \gets \BV^{(0)}$.
\Repeat
    \State Set \(a_{\rm old} \gets \overline{\Delta}(\BV_{\rm cur};\sigma^2)\).
    \For{\(k=1,\ldots,K\)}
    \State Set $\BV^{(t, k)} \gets \BV_{\rm cur}$.
        \State Compute \(\BG_k(\BV^{(t, k)})\) by \eqref{eq: derivative iter Vk}.
        \State Set \(\mu \gets \mu_0\) and \(\mathrm{accepted} \gets \mathrm{false}\).
        \Repeat
            \State Compute \(\BV_k^{(t)}\) by \eqref{eq: local update Vk}.
            \State Let \(\BV_{+}^{(t, k)}\) be \(\BV^{(t, k)}\) with block \(k\) replaced by \(\BV_k^{(t)}\).
            \State Evaluate \(\overline{\Delta}(\BV_{+}^{(t, k)};\sigma^2)\) by \eqref{eq: de MI}.
            \If{\eqref{eq:line-search-condition} holds}
                \State Set \(\BV_{\rm cur} \gets \BV_{+}^{(t, k)}\) and \(\mathrm{accepted}\gets \mathrm{true}\).
            \Else
                \State Set \(\mu \gets \tau\mu\).
            \EndIf
        \Until{\(\mathrm{accepted}=\mathrm{true}\) or \(\mu \leq \mu_{\rm m}\)}.
    \EndFor
    \State Set \(a_{\rm new}\gets \overline{\Delta}(\BV_{\rm cur};\sigma^2)\).
    \State Set \(t\leftarrow t+1\).
\Until{\(|a_{\rm new}-a_{\rm old}|\le\epsilon_{\rm opt}(1+|a_{\rm old}|)\)}
\State Return \(\BV_{\rm cur}\).
\end{algorithmic}
\end{algorithm}

\subsection{Computational Complexity}
\label{subsec:optimization-complexity}

The computational burden of Algorithm~\ref{alg:statistical-rmt-mcr2} is dominated by repeated evaluations of the objective and its gradient, which requires solving the fixed-point system in \eqref{eq: de MCR2 equation} and then computing the corresponding log-determinant and matrix inverses. Assume \(L_{\rm fp}\) iterations for computing the solution for \eqref{eq: de MCR2 equation} are required. Then, the computational cost for a single gradient evaluation \eqref{eq:fp-v-gradient} is given by
\begin{equation}
    \mathcal{O}\!\left((J+1)(L_{\rm fp}+1)(N_r^3+N_t^3)\right).
\end{equation}
With an average of \(L_{\rm ls}\) line-search trials per block, the dominant computational cost for a single BCA-SCA sweep over the \(K\) precoder blocks is given by
\begin{equation}
    \mathcal{O}\!\left(K(J+1)(1+L_{\rm ls})(L_{\rm fp}+1)(N_r^3+N_t^3)\right).
\end{equation}
Solving for the local surrogate optimum \eqref{eq: local update Vk} involves inverting the local block covariance \(\BSigma^{(kk)}\) and performing a scalar search for the multiplier \(\lambda_k\). If \(L_{\rm proj}\) scalar-search iterations are used, the additional cost per sweep is \(\mathcal{O}(L_{\rm proj}\sum_{k=1}^{K}d_k^3)\), which is of lower order when the feature dimensions $d_k$ are moderate compared with the number of the receive and transmit antennas.
\section{Numerical Experiments}
\label{sec:experiments}
\begin{table*}[t]
\centering
\caption{Controlled feature-role diagnostics on ModelNet10. Probe accuracies are percentages; trace ratio denotes \(\operatorname{Tr}\BSigma_u/\operatorname{Tr}\BSigma_s\), and \(\kappa\) denotes covariance condition number.}
\label{tab:feature_probe_diagnostics}
\scriptsize
\begin{tabular}{lrrrrrrrr}
\toprule
Encoder & \(\Bu\)-probe acc. & \(\Bs\)-probe acc. & \([\Bu,\Bs]\) acc. & \(\operatorname{Tr}\BSigma_u\) & \(\operatorname{Tr}\BSigma_s\) & Trace ratio & \(\kappa_u\) & \(\kappa_s\) \\
\midrule
Encoder 1 & 48.46 & 95.93 & 96.92 & 2019.54 & 303.72 & 6.65 & 20.53 & 15.75 \\
Encoder 2 & 40.64 & 96.04 & 95.93 & 899.66 & 165.14 & 5.45 & 13.84 & 20.80 \\
Encoder 3 & 55.73 & 95.15 & 95.59 & 1160.96 & 210.69 & 5.51 & 25.46 & 20.34 \\
Encoder 4 & 41.19 & 97.69 & 97.47 & 2140.37 & 313.82 & 6.82 & 24.21 & 17.12 \\
Encoder 5 & 56.61 & 96.15 & 95.70 & 1008.38 & 167.98 & 6.00 & 40.79 & 23.76 \\
\midrule
Mean \(\pm\) std & \(48.52\pm7.64\) & \(96.19\pm0.92\) & \(96.32\pm0.83\) & \(1445.78\pm587.87\) & \(232.27\pm72.22\) & \(6.09\pm0.63\) & \(24.96\pm9.93\) & \(19.55\pm3.17\) \\
\bottomrule
\end{tabular}
\end{table*}

\begin{table*}[t]
\centering
\caption{Channel information and design principles of the proposed method, baselines, and oracle method.}
\label{tab:baseline_information_sets}
\renewcommand{\arraystretch}{1.25}
\resizebox{\textwidth}{!}{%
\begin{tabular}{>{\centering\arraybackslash}m{0.20\textwidth}|>{\centering\arraybackslash}m{0.25\textwidth}|>{\centering\arraybackslash}m{0.16\textwidth}|>{\centering\arraybackslash}m{0.35\textwidth}}
\toprule
Method & Channel information & Depends on realized \(\BH\)? & Design principle \\
\midrule
Proposed BCA-SCA & Long-term statistics \(\{\BB_k,\BT_k\}_{k=1}^{K}\) & No & Deterministic approximation for the MCR\(^2\) surrogate \\
\hline
Identity & None & No & Fixed block identity map \\
\hline
Random & None & No & Feasible block-random precoders \\
\hline
Maximum-ER Eigenvalue SAA & Statistics and training-channel samples & No & Capacity over \(\BT_k\)-eigenmode powers \\
\hline
Maximum-ER Cov. SAA & Training-channel samples & No & SAA capacity over full transmit covariances \\
\hline
LMMSE SAA & Training-channel samples & No & Sample-average feature reconstruction \\
\hline
Maximum-ER Water-Filling & Long-term statistics \(\{\BB_k,\BT_k\}_{k=1}^{K}\) & No & Deterministic-equivalent water filling \\
\hline
Maximum-MCR\(^2\) Oracle & Realized channel \(\BH\) & Yes & Instantaneous MCR\(^2\) utility reference \\
\bottomrule
\end{tabular}%
}
\end{table*}
This section presents numerical experiments to validate the proposed semantic precoding optimization framework. The experiments are designed to examine both the accuracy of the deterministic approximation and the behavior of the resulting precoder under controlled task-oriented transmission settings.

\subsection{Experiment Setup}
\label{subsec:experiment_setup}

We consider the classification task on the ModelNet10~\cite{wu20153d} dataset in the multi-device edge inference system. In particular, the dataset contains \(J=10\) classes of computer-aided design objects, with each object represented by 12 rendered views. The number of samples per class is imbalanced, ranging from 106 to 889 training samples and from 50 to 100 testing samples. Following the multi-device feature interface in the task-oriented communication baseline, we select three views for each object as the inputs. In the experiments, the VGG11 backbone~\cite{Simonyan15} is adopted as the feature encoder $f_{k}(\cdot; \theta_k)$, with its parameters frozen in advance. The dimension of the feature vector is set to $d_k = 4$ for each device. The extracted training and testing feature sets contain 3991 and 908 samples, respectively, and the data statistics \(\{\BSigma,\BSigma_j,p_j\}\) are computed from the training feature set.
\par
The system dimensions are set as $K = 3$ edge devices, \(N_{t,k}=4\) transmit antennas per device, and \(N_r=8\) receive antennas.  The power budget for each edge device is \(P_k=1\)W and the signal-to-noise ratio (SNR) is defined as \(\mathrm{SNR}=1/\delta_0^2\). We consider the standard exponential correlation model \cite{Wagner2012Seb}, where the spatial correlation matrices are given by $[\BR_k]_{i, j} = r_k^{\abs{i-j}}$ and $[\BT_k]_{i, j} = t_k^{\abs{i-j}}$, with \((r_1,r_2,r_3)=(0.15,0.55,0.85)\) and \((t_1,t_2,t_3)=(0.10,0.50,0.80)\), respectively. The large-scale fading coefficients and the MCR$^2$ coding rate in \eqref{eq: mcr2 def} are set to \((\beta_1,\beta_2,\beta_3)=(0.4,1.0,0.8)\) and  \(\epsilon=10^{-3}\), respectively.

\subsection{Controlled Feature-Role Construction}
\label{subsec:feature_role_setup}
This subsection describes the construction of the controlled feature-role that will be employed in the following experiments.
The purpose of this construction is to diagnose how different precoding methods allocate transmission power to feature components with different task relevance. Specifically, by slightly modifying the loss function in the training process of encoder, we obtain two feature components with contrasting roles: the \textit{nuisance} component has a larger energy but is less informative, whereas the \textit{semantic} component has a smaller energy but is more informative.
Following the notation in Section~\ref{sec:system_model}, we write the transmitted feature of device \(k\) as \(\Bz_k=[\Bu_k^\top,\Bs_k^\top]^\top\), where \(\Bu_k,\Bs_k\in\C^2\) represent the nuisance and the semantic sub-features, respectively. Then, the corresponding covariance matrix is partitioned as \(\BSigma^{(kk)}=\begin{bmatrix}\BSigma_u^{(kk)}&\BSigma_{us}^{(kk)}\\ \BSigma_{su}^{(kk)}&\BSigma_s^{(kk)}\end{bmatrix}\). Hence, the energy of the two sub-features is given by $\Tr \BSigma_u^{(kk)}$ and $\Tr \BSigma_s^{(kk)}$, respectively.
We note that the nuisance sub‑feature has high energy but weak standalone label information, and it is not simply artificial noise.
\par
For further insights, Table~\ref{tab:feature_probe_diagnostics} summarizes the diagnostic results across the different feature-roles. The diagnostics are evaluated on the given test set, and the five encoders are independently trained with the same architecture. It can be observed that the energy of the nuisance sub-feature is larger than that of the semantic sub-feature, with an average trace ratio of \(6.09\pm0.63\). However, the linear classification accuracy for the nuisance feature is only \(48.52\pm7.64\%\), whereas the semantic sub-feature achieves \(96.19\pm0.92\%\). Moreover, the full feature vector $[\Bu, \Bs]$ also exhibits strong linear separability, with a probe accuracy of \(96.32\pm0.83\%\). These discrepancies in probe accuracy and feature energy indicate an energy-discriminability mismatch within the full feature: The sub-feature with high energy is weakly informative when used alone, whereas lower-energy directions carry more discriminative information. The condition numbers of the covariance matrices of the two sub-features are also listed in Table \ref{tab:feature_probe_diagnostics}. Notably, the observed discrepancies are not artifacts of the ill‑conditioning of the covariance matrix. Collectively, the diagnostics establish the setting of the controlled feature that is used in the following experiments. 


\begin{figure*}[t]
\centering
\subfloat[]{%
\includegraphics[width=0.32\textwidth]{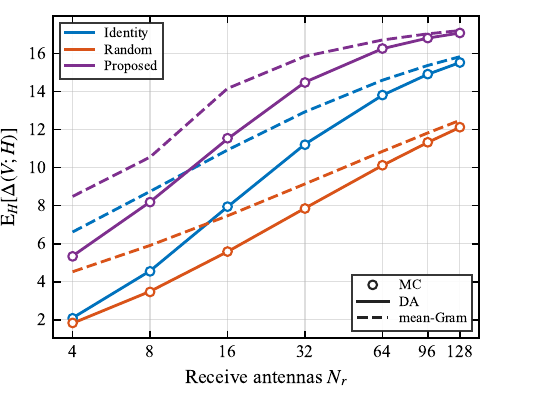}%
\label{fig:de_validation_nr_utility}%
}
\hfill
\subfloat[]{%
\includegraphics[width=0.30\textwidth]{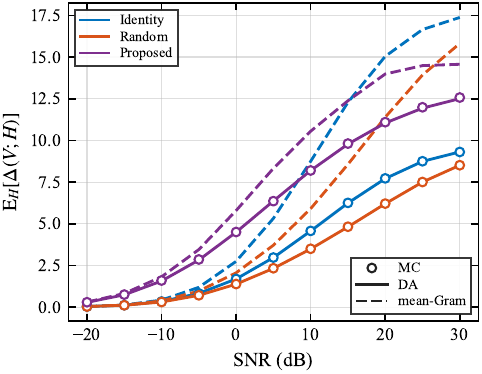}%
\label{fig:de_validation_snr_utility}%
}
\hfill
\subfloat[]{%
\includegraphics[width=0.32\textwidth]{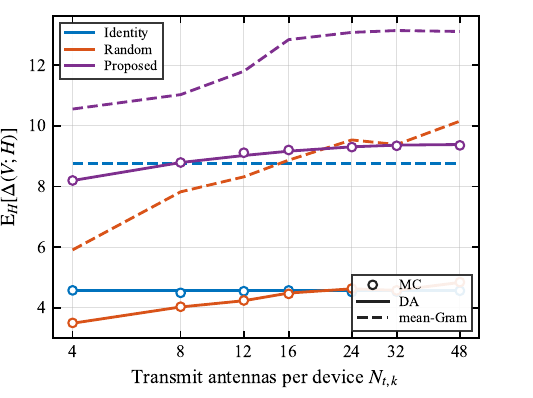}%
\label{fig:de_validation_nt_utility}%
}
\caption{Deterministic approximation validation of the task utility. The three subfigures compare Monte-Carlo estimates of \(\mathbb E_\BH[\Delta(V;\BH)]\), the finite-ratio deterministic approximation, and the mean-Gram approximation for the identity, random, and proposed precoders.}
\label{fig:de_validation_utility}
\end{figure*}

\subsection{Baselines}
\label{subsec:experiment_baselines}
For performance comparison, we consider the following precoder baselines. For baselines 3)--5), the expected objective is approximated by sample average approximation (SAA)~\cite{kleywegt2002sample} over \(M\) offline training channel samples \(\{\BH^{(m)}\}_{m=1}^{M}\).

\begin{itemize}
    \item[1)] \textbf{Identity}: Each device uses a scaled rectangular identity precoder. Specifically, let \(\BE_k\in\C^{N_{t,k}\times d_k}\) be the matrix with ones on the first \(\min(N_{t,k},d_k)\) diagonal entries and zeros elsewhere. We set \(\BV_k=a_k\BE_k\), where \(a_k\) is chosen to satisfy \eqref{eq:power_constraint}.
    \item[2)] \textbf{Random}: For each device, we draw \(\widetilde{\BV}_k\in\C^{N_{t,k}\times d_k}\) with i.i.d. \(\cC\cN(0,1)\) entries and set \(\BV_k=a_k\widetilde{\BV}_k\), where \(a_k\) is chosen to satisfy \eqref{eq:power_constraint}. The reported result is averaged over 8 independent realizations.
    \item[3)] \textbf{Maximum-ER Eigenvalue SAA~\cite{jorswieck2004capacityRange}}: Write the eigenvalue decomposition for \(\BT_k\) as \(\BT_k=\BU_k\BLambda_k\BU_k^H\). Device \(k\in[K]\) transmits along the eigenvectors of \(\BT_k\)~\cite[Theorem~1]{jorswieck2004capacityRange}, and the corresponding transmission covariance is parameterized as \(\BQ_k=\BU_k\operatorname{diag}(\Bq_k)\BU_k^H\), where $\{q_{k, i}\}_{k, i = 1}^{K, N_{t,k}}$ are obtained by maximizing the ergodic rate (ER)~\cite[Eq. (5)]{jorswieck2004capacityRange}. The resulting covariance blocks are then converted to feasible feature precoders\footnote{The solution for $\BV_k$ exists if $N_{t, k} \leq d_k$ and $\BSigma^{(kk)}$ is full rank.  In the experiment, we set $N_{t, k} = d_k = 4$ and $\BV_k = \BQ^{1/2}_k (\BSigma^{(kk)})^{-1/2}$.} \(\{\BV_k\}_{k=1}^{K}\).  
    \item[4)] \textbf{Maximum-ER Covariance SAA}: This baseline serves as a less constrained ER-oriented counterpart compared to baseline 3). Instead of parameterizing \(\BQ_k\) by the eigenvectors of \(\BT_k\), this baseline directly optimizes the transmit covariance matrices \(\{\BQ_k\succeq0\}_{k=1}^{K}\). The optimization maximizes the same SAA ER objective under \(\Tr(\BQ_k)\le N_{t,k}P_k\). The resulting covariance blocks are then converted to feasible feature precoders \(\{\BV_k\}_{k=1}^{K}\).
     \item[5)]  \textbf{LMMSE SAA~\cite{cai2024taskcomm}}: 
     For each channel sample $\BH^{(m)} \in \{\BH^{(m)}\}_{m=1}^{M}$, the edge receiver observes \(\By_m=\BH^{(m)}\BV\Bz+\Bn_m\) and applies a linear minimum mean-square error (LMMSE) detector \(R_m(\BV) \in \C^{d \times N_r} \). The precoder is obtained by solving
    \begin{align}
        \min_{\BV\in\cV}
        \frac{1}{M}\sum_{m=1}^{M}
        \E \left[ \left\|\Bz-R_m(\BV) \By_m\right\|^2 \right].
    \end{align}

    \item[6)] \textbf{Maximum-ER Water-Filling~\cite{couillet2011macde}}: 
    Using Theorem \ref{thm: EMI approx}, the deterministic approximation for the ER can be derived. The covariance matrix $\BQ = \mathrm{blkdiag}(\BQ_k; k \in [K])$ is obtained by optimizing the deterministic approximation under the power constraint via an iterative water-filling method \cite[Table~II]{couillet2011macde}. Finally, $\BV_k$ is obtained by $\BQ_k$ using the same method as in baseline 3). 

    \item[7)] \textbf{Maximum-MCR$^2$ Oracle~\cite{cai2024taskcomm}}: 
    In this baseline, the transmitter knows the instantaneous $\BH$, and the precoder is obtained by solving 
    \begin{align}
    \max_{\BV \in \cV} \Delta(\BV; \sigma^2, \BH)
    \end{align}
    using the BCA method. This precoder is used only as an oracle reference and is not counted as a statistical-CSIT baseline.
\end{itemize}

Table~\ref{tab:baseline_information_sets} summarizes all the considered methods. All reported results are evaluated with the same frozen feature encoder, channel model, power constraints, MAP receiver, and fixed ModelNet10 test split.


\subsection{Accuracy of the Theoretical Result}
\label{subsec:de_validation}

Fig.~\ref{fig:de_validation_utility} validates the accuracy of Theorem~\ref{thm: EMI approx}. For each precoder, we compare the Monte-Carlo simulation of \(\mathbb \E[\Delta(\BV;\sigma^2, \BH)]\), the deterministic approximation \eqref{eq: de MI}, and the mean-Gram approximation \(\Delta_{\rm MG}(\BV; \sigma^2)\)\footnote{For any covariance matrix \(\bcC\in \widetilde{\Sigma}\), the mean-Gram approximation is given by \(\E \log\det(\beta \BI_{N_r}+\alpha \BH\BV\bcC\BV^H\BH^H)\approx \log\det(\beta\BI_{N_r}+\alpha \bcC\BV^H\mathbb E[\BH^H\BH]\BV)\), where the right-hand side follows from the identity $\det(\BI + \BX\BY) = \det(\BI + \BY\BX)$ and the replacement \(\BH^H\BH\gets \mathbb E[\BH^H\BH]\).}. The mean-Gram curve is included as an intuitive benchmark because it preserves the log-det form of the semantic utility while replacing channel by the Gram matrix, whereas the Monte-Carlo simulation serves as an empirical ground truth. All three subfigures evaluate three precoders: identity, random, and the proposed BCA-SCA precoder.
\par
In Fig.~\ref{fig:de_validation_nr_utility}, we set $\textrm{SNR} = 10\ dB$, \(N_{t,k}=4\), and vary \(N_r\) from 4 to 128. In Fig.~\ref{fig:de_validation_snr_utility}, we set $N_r=8$, $N_{t,k}=4$, and vary the SNR from \(-20\) dB to 30 dB. In Fig.~\ref{fig:de_validation_nt_utility}, we set $N_r=8$, $\textrm{SNR} = 10\ dB$, and vary \(N_{t,k}\) from 4 to 48. Across the three settings, the proposed deterministic approximation closely matches the Monte-Carlo estimates of~\eqref{eq:statistical_csit_objective}. This agreement provides a numerical validation of Theorem~\ref{thm: EMI approx} and confirms that the deterministic approximation remains accurate in the finite-dimensional regimes. Notably, the mean-Gram approximation deviates largely from the Monte-Carlo curves, especially in the high SNR and small $N_r$ regime. This is because $\BH^H\BH$ is replaced by $\mathbb E[\BH^H\BH]$ before applying the nonlinear $\log\det(\cdot)$ operation. Such replacement removes the finite-dimensional eigenvalue fluctuations of $\BH^H\BH$, which become important in the high SNR or small $N_r$ regime. By contrast, the proposed deterministic approximation captures the finite-dimensional eigenvalue distribution through the fixed-point equations, which explains its closer agreement with the Monte-Carlo curves in Fig.~\ref{fig:de_validation_utility}.



\begin{figure*}[t]
\centering
\subfloat[]{%
\includegraphics[width=0.32\textwidth]{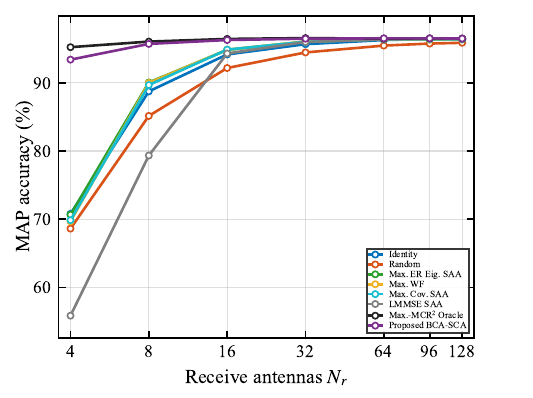}%
\label{fig:system_performance_nr}%
}
\hfill
\subfloat[]{%
\includegraphics[width=0.32\textwidth]{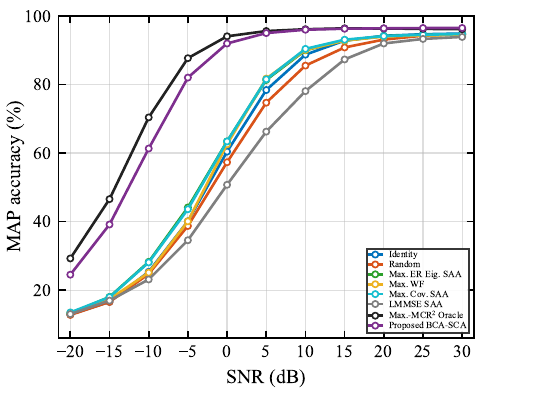}%
\label{fig:system_performance_snr}%
}
\hfill
\subfloat[]{%
\includegraphics[width=0.32\textwidth]{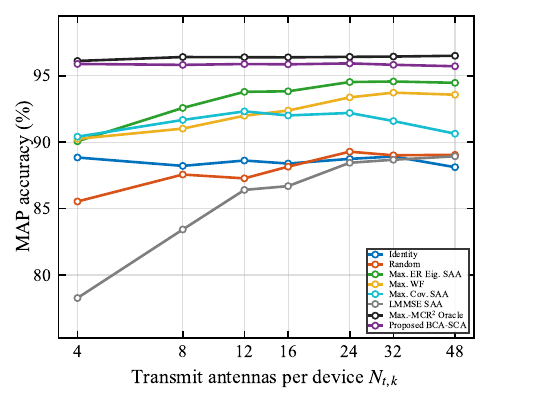}%
\label{fig:system_performance_nt}%
}
\caption{MAP classification accuracy comparison. The setting for each subfigure is as follows: (a) \(N_{t,k}=4\), SNR = 10 dB, and \(N_r \in \{4,8,16,32,64,128\}\); (b) \(N_{t,k}=4\), \(N_r=8\), and SNR \(\in \{-20,-10,0,10,20,30\}\) dB; (c) \(N_r=8\), SNR = 10 dB, and \(N_{t,k} \in \{4,8,16,32,48\}\).}
\label{fig:system_performance_comparison}
\end{figure*}
\subsection{System Performance Comparison}
\label{subsec:system_performance}

Fig.~\ref{fig:system_performance_comparison} compares the MAP classification accuracy of the proposed task-oriented precoder with the baseline methods. In Fig.~\ref{fig:system_performance_nr}, we set $\textrm{SNR} = 10\ dB,\ N_{t,k}=4 $ and vary \(N_r\) from 4 to 128. In Fig.~\ref{fig:system_performance_snr}, we set $N_r=8$, $N_{t,k}=4$, and vary the SNR from \(-20\) dB to 30 dB. In Fig.~\ref{fig:system_performance_nt}, we set $N_r=8$, $\textrm{SNR} = 10\ dB$, and vary \(N_{t,k}\) from 4 to 48.

Across all three sweeps, the proposed precoder is close to the oracle method in the moderate-to-high \(\mathrm{SNR}\) and large \(N_r\) regimes. The proposed method also outperforms all non-oracle baselines, especially in small $N_r$ and low $\textrm{SNR}$ regimes. This indicates that when the wireless channel remains a bottleneck for downstream task, transmit power allocation directly affects which feature components are reliably transmitted to the receiver. The proposed method optimizes an MCR{$^2$} objective as a task utility surrogate, such criterion encourages transmit power to be allocated towards semantic feature components rather than high energy but less informative nuisance feature components. 
By contrast, the capacity-oriented baselines and the LMMSE baseline prioritize spectral efficiency and mean-square feature reconstruction rather than the contribution of each feature component to the downstream task. Since the nuisance block has larger feature energy in our controlled feature construction, reconstruction-oriented methods tend to preserve nuisance components, while capacity-oriented methods maximize an ergodic rate objective that are agnostic to the nuisance or semantic role of the transmitted features. Hence, neither type of method explicitly aligns the resulting precoder with the class-discriminative feature directions required for MAP classification. Furthermore, as \(N_r\) or \(\mathrm{SNR}\) increases, the performance of all methods improves and the gaps among methods become smaller. This is because the wireless channel becomes less limiting for the downstream task and the accuracy of all methods eventually approaches the ceiling performance imposed by the fixed feature encoder.



\subsection{Semantic Role-Power Allocation}
\label{subsec:mode_allocation_diagnostics}

To interpret the accuracy gains in Fig.~\ref{fig:system_performance_comparison}, we examine how the transmit covariance induced by the precoder is distributed between the two feature components defined in Section~\ref{subsec:feature_role_setup}. 

For device \(k\), we write the extracted feature as \(\Bz_k=[\Bu_k^\top,\Bs_k^\top]^\top\), where \(\Bu_k\) and \(\Bs_k\) denote the nuisance and semantic blocks, respectively. We partition the precoder and the device-wise feature covariance as
\begin{align}
\label{eq: partition precoder cov}
    \BV_k = \begin{bmatrix}
        \BV_{k, u} & \BV_{k, s}
    \end{bmatrix}, ~~ \BSigma^{(kk)} = \begin{bmatrix}
        \BSigma_{u}^{(kk)} & \BSigma_{us}^{(kk)} \\
        \BSigma_{su}^{(kk)} & \BSigma_{s}^{(kk)}
    \end{bmatrix},
\end{align}
respectively. With \(N_{t,k} P_k^{\rm tx}=\operatorname{Tr}(\BV_k\BSigma^{(kk)}\BV_k^H)\), the nuisance, semantic, and \(u\)-\(s\) cross role-power fractions are defined as
\begin{align*} 
p_{k,u}&=\frac{\Tr(\BV_{k,u}\BSigma_{u}^{(kk)}\BV_{k,u}^H)}{N_{t,k} P_k^{\rm tx}},\quad
p_{k,s}=\frac{\Tr(\BV_{k,s}\BSigma_{s}^{(kk)}\BV_{k,s}^H)}{N_{t,k} P_k^{\rm tx}}, \\
p_{k,us}&=\frac{\Tr(\BV_{k,u}\BSigma_{us}^{(kk)}\BV_{k,s}^H+\BV_{k,s}(\BSigma_{us}^{(kk)})^H\BV_{k,u}^H)}{N_{t,k} P_k^{\rm tx}}. \numberthis
\end{align*}
Table~\ref{tab:role_power_10db} presents these role-power fractions at 10 dB under the default antenna setting. For all methods, we report the mean and standard deviation of each entry which are computed over the five frozen feature encoders, the three edge device and method-specific random seeds or channel realizations when applicable.

The results show a clear two-group structure. The proposed precoder and the oracle method both allocate most transmitted covariance to the semantic block, whereas the remaining methods retain substantially more covariance on the nuisance block. 
This indicates that the proposed precoder based on statistical-CSIT allocates transmitted covariance according to task relevance, rather than simply following high-energy feature directions. 
This role allocation provides a mechanism-level explanation for the accuracy gain in Fig.~\ref{fig:system_performance_comparison}. The \(u\)-\(s\) cross entry can be negative because it is an interaction term in the decomposition of \(\operatorname{Tr}(\BV_k\BSigma^{(kk)}\BV_k^H)\), not an independent nonnegative power component.

\begin{table*}[t]
\centering
\caption{Semantic role-power allocation at 10 dB.}
\label{tab:role_power_10db}
\scriptsize
\begin{tabular}{lrrrr}
\toprule
Method & Nuisance & Semantic & \(u\)-\(s\) cross & MAP accuracy \\
\midrule
Proposed BCA-SCA & \(0.067\pm0.072\) & \(0.955\pm0.070\) & \(-0.023\pm0.021\) & \(95.97\pm0.40\) \\
Identity & \(0.856\pm0.019\) & \(0.144\pm0.019\) & \(0.000\pm0.000\) & \(88.70\pm2.23\) \\
Random & \(0.851\pm0.069\) & \(0.154\pm0.062\) & \(-0.005\pm0.035\) & \(85.51\pm2.03\) \\
Maximum-ER Eigenvalue SAA & \(0.746\pm0.198\) & \(0.282\pm0.199\) & \(-0.028\pm0.023\) & \(89.97\pm1.45\) \\
Maximum-ER Covariance SAA & \(0.765\pm0.184\) & \(0.263\pm0.186\) & \(-0.027\pm0.023\) & \(90.38\pm1.15\) \\
LMMSE SAA & \(0.954\pm0.046\) & \(0.031\pm0.049\) & \(0.015\pm0.011\) & \(78.07\pm4.59\) \\
Maximum-ER Water Filling & \(0.745\pm0.200\) & \(0.283\pm0.202\) & \(-0.028\pm0.024\) & \(90.06\pm1.59\) \\
Maximum-MCR{$^2$} Oracle & \(0.157\pm0.100\) & \(0.865\pm0.093\) & \(-0.023\pm0.023\) & \(96.12\pm0.36\) \\
\bottomrule
\end{tabular}
\end{table*}

We next examine whether the same power allocation is reflected at the eigenmode level. Specifically, we decompose the precoded transmit covariance and the corresponding receive-side signal covariance as follows
\begin{align*}
 \BP_{\BSigma}(\BV) &= \BV\BSigma \BV^H = \BP_u+\BP_s+\BP_{us, su} \\
 &=\sum_i\lambda_i \Ba_i \Ba_i^H, \numberthis\\
\BS_{\BSigma}(\BV, \BH) &= \rho_{\rm mcr}\BH\BV\BSigma \BV^H\BH^H\\
&=\BS_u+\BS_s+\BS_{us, su}=\sum_i\mu_i \Bb_i \Bb_i^H, \numberthis
\end{align*}
where \(\BP_u\), \(\BP_s\), and \(\BP_{us, su}\) are induced by the partition \eqref{eq: partition precoder cov}, and \(\BS_u=\rho_{\rm mcr}\BH\BP_u\BH^H\), \(\BS_s=\rho_{\rm mcr}\BH\BP_s\BH^H\), and \(\BS_{us, su}=\rho_{\rm mcr}\BH\BP_{us, su}\BH^H\). 
Fig.~\ref{fig:role_colored_eigenspectrum} visualizes the above eigenmode decompositions. Fig.~\ref{fig:role_colored_transmit_eigenspectrum} shows the sorted normalized eigenvalues \(\lambda_i/ \Tr (\BP_{\BSigma})\) of the precoded transmit signal covariance, while Fig.~\ref{fig:role_colored_receive_eigenspectrum} shows the sorted normalized eigenvalues \(\mu_i/\Tr (\BS_{\BSigma})\) of the receive signal covariance. The bubble area is proportional to the corresponding normalized eigenvalue, while the bubble color represents the following semantic fraction
\begin{align}
\phi_i^{\rm tx}=\frac{\Ba_i^H\BP_s\Ba_i}{\Ba_i^H(\BP_u+\BP_s)\Ba_i},\qquad
\phi_i^{\rm rx}=\frac{\Bb_i^H\BS_s\Bb_i}{\Bb_i^H(\BS_u+\BS_s)\Bb_i}. 
\end{align}
Here, \(\phi_i^{\rm tx}\) and \(\phi_i^{\rm rx}\) quantify the fraction of the \(i\)-th transmit-side or receive-side eigenmode that is contributed by the semantic block. A larger \(\phi_i\) means that the corresponding eigenmode is mainly induced by the semantic block.


Fig.~\ref{fig:role_colored_eigenspectrum} shows that the two-group structure in Table~\ref{tab:role_power_10db} is also reflected at the eigenmode level. For the proposed statistical-CSIT precoder, the dominant transmit-side and receive-side eigenmodes have large semantic fractions, and the oracle method exhibits a similar pattern. By contrast, the dominant eigenmodes of the remaining baselines have much smaller semantic fractions, indicating that their strongest modes are still mainly induced by the nuisance block. Therefore, the eigenmode-level diagnostic supports the same conclusion as Table~\ref{tab:role_power_10db}: the strongest transmit-side and receive-side eigenmodes of the proposed statistical-CSIT precoder are mainly induced by the semantic block, whereas those of the baseline methods are mainly induced by the nuisance block.


\begin{figure*}[t]
\centering
\subfloat[]{%
\includegraphics[width=0.49\textwidth]{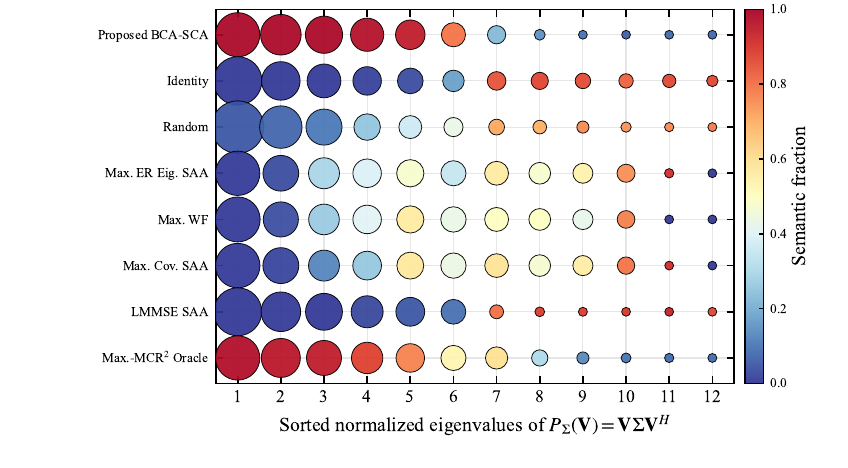}%
\label{fig:role_colored_transmit_eigenspectrum}%
}
\subfloat[]{%
\includegraphics[width=0.46\textwidth]{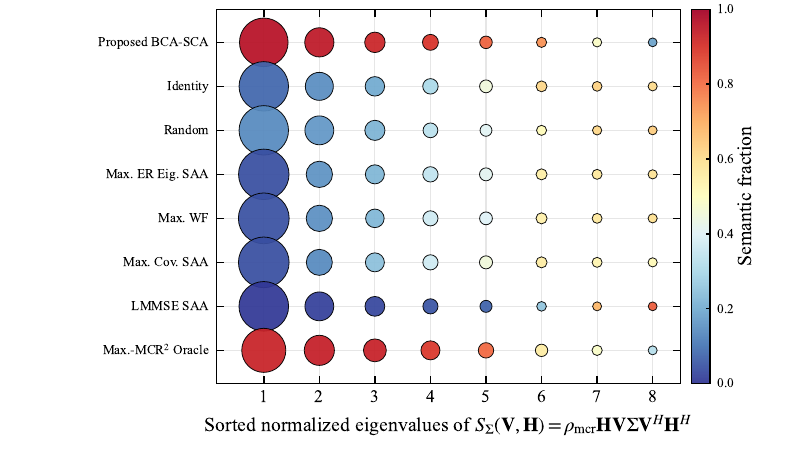}%
\label{fig:role_colored_receive_eigenspectrum}%
}
\caption{Semantic composition of sorted normalized eigenvalues. The transmit-side plot uses \(\BP_{\BSigma}(\BV)=\BV\BSigma \BV^H\), while the receive-side plot uses \(\BS_{\BSigma}(\BV,\BH)=\rho_{\rm mcr}\BH\BV\BSigma \BV^H\BH^H\). Bubble area represents normalized eigenvalue mass, and color represents the semantic fraction of the corresponding mode.}
\label{fig:role_colored_eigenspectrum}
\end{figure*}

\section{Conclusion}
\label{sec:conclusion}

In this paper, we studied the task-oriented precoding problem for multi-device MIMO systems with statistical CSIT. We adopted the received-feature MCR\(^2\) utility as a surrogate for inference accuracy, and formulated it as a coupled difference of log-det terms. By leveraging RMT, we derived deterministic approximations for the log-det terms in the surrogate objective. The resulting approximation depends only on channel statistics and training-set feature statistics, enabling precoder optimization without instantaneous CSIT. Based on this deterministic approximation objective, we developed a BCA-SCA algorithm under per-device power constraints. Numerical experiments on the ModelNet10 dataset verify the accuracy of the proposed deterministic approximation and demonstrate the superior system performance of the statistical-CSIT precoder among all baselines. Future work may extend the framework to non-centered channel models and second-order analyses for reliability-constrained statistical-CSIT precoding.

\appendices

\section{Proof of Theorem \ref{thm: convergence resolvent}}
\label{app: convergence resolvent}
Before going to the details, we define the key intermediate quantities below:
\begin{align*}
    &g_k(z) = \frac{1}{M_k} \Tr \BOmega_k \BQ(z), \quad \underline{g}_k(z) = \E g_k(z), \\
    &\widetilde{\BPsi}(z) = -\frac{1}{z}\left[\BI_{N_t} + \sum_{k \in [K]} \underline{g}_k(z) \widetilde{\BOmega}_k  \right]^{-1}, \\
    &\widetilde{g}_k(z) = \frac{1}{M_k} \Tr \widetilde{\BOmega}_k \widetilde{\BPsi}(z), \\
    & \BPsi(z) = -\frac{1}{z} \left[ \BI_{N_r} + \sum_{k \in [K]} \widetilde{g}_k(z) \BOmega_k \right]^{-1}. \numberthis \label{eq: resol intermediat terms}
\end{align*}
Our proof proceeds in two steps: a stochastic step and a deterministic step. In the stochastic step, we first show that $\BQ(z)$ can be approximated by $\BPsi(z)$, i.e., the two matrices are close in the weak operator sense. In the deterministic step, we use the bound on $\BQ - \BPsi$  obtained in the stochastic step to establish the connection between $\BPsi$ and $\BTheta$.
\par
The stochastic step is completed using Gaussian tools \cite{hachem2008mimo}, which consist of two main ingredients: the integration by parts formula and the Nash–Poincaré inequality. First, employing the resolvent identity 
\begin{align}
\label{eq: resol identity}
    \BQ(z) \BZ\BZ^H - z\BQ(z) = \BI_{N_r},
\end{align}
we apply the integration by parts formula to decompose $[\BQ(z) \BZ\BZ^H]_{i, j}$ into products of resolvent functional terms (see \eqref{eq: resol intermediat terms} below). Then, the Nash–Poincaré inequality is used to control the variance for the related terms.

Expanding the matrix product, we have
\begin{align*}
&\E [\BQ(z) \BZ \BZ^H]_{i, j} \\
&=
    \sum_{t, m, s', s, k}\E [\BQ(z) \BA_k]_{i, t} [\BX_k]_{t, m} [\widetilde{\BA}_k]_{m, s'} [\BZ]_{j, s}^*. \numberthis
\end{align*}
Applying the integration by parts formula to each term of RHS of the above equation and taking $[\BX_k]_{t, m}$ as the variable, we obtain
\begin{align*}
    &\E [\BQ \BA_k]_{i, t} [\BX_k]_{t, m} [\widetilde{\BA}_k]_{m, s'} [\BZ]_{ j, s}^* \\
    &= \frac{[\widetilde{\BA}_k]_{m, s'} }{M_k} \E \frac{\partial  [\BQ\BA_k]_{i, t} [\BZ]_{j,s}^*}{\partial [\BX_k]_{t, m}^*} \\
    &= -\frac{[\widetilde{\BA}_k]_{m, s'} }{M_k} \E \left[ [{\BQ}\BZ\widetilde{\BA}_k^H]_{i, m} [\BA_k^H\BQ \BA_k]_{t,t} [\BZ]^*_{j,s} \right] \\
    &+ \sum_k \frac{[\widetilde{\BA}_k]_{m, s'}[\widetilde{\BA}_k]_{m, s}^* [\BA_k^H]_{t, j} }{M_k} \E \left[ [\BQ\BA_k ]_{i, t} \right]. \numberthis
\end{align*}
Summing over the subscripts $k$, $t$, and $m$, we obtain
\begin{align*}
    &\E [\BQ \BZ]_{i, s'}  [\BZ]_{ j, s}^* \\
    &=  \E \left[ -\sum_kg_k[{\BQ}\BZ   \widetilde{\BOmega}_k]_{i, s'}  [\BZ]^*_{j,s} \right] \\
    &+ \frac{[\widetilde{\BOmega}_k]_{s, s'}}{M_k} \E [\BQ\BOmega_k]_{i, j}. \numberthis
\end{align*}
Writing the random variable $g_k = \E g_k + g_k^\circ$, multiplying $-z[\widetilde{\BPsi}]_{s', s}$ on both sides of the above equation, and summing over the subscripts $s$ and $s'$ yields
\begin{align*}
    \E [\BQ \BZ \BZ^H]_{i, j} &= - \sum_k z\widetilde{g}_k\E [\BQ \BOmega_k]_{i, j} \\
    &+z \sum_k \mathrm{Cov}(g_k, [{\BQ}\BZ \widetilde{\BOmega}_k \widetilde{\BPsi}\BZ^H]_{i, j} ). \numberthis
\end{align*}
Applying the resolvent identity \eqref{eq: resol identity}, multiplying $[\BPsi \bcR]_{j, i}$ at both sides of the above equation, and summing over $i$ and $j$, we obtain 
\begin{align}
    \Tr \bcR \E \BQ - \Tr \bcR\BPsi = \sum_k \varepsilon_{\bcR, k}, 
    \label{eq: Tr Q Psi}
\end{align}
where $ \varepsilon_{\bcR, k} = -z\Cov(g_k, \Tr{\BQ}\BZ \widetilde{\BOmega}_k \widetilde{\BPsi} \BZ^H \BPsi \bcR)$. To derive the bound for $\varepsilon_{\cR, k}$, it suffices to control the variances of $g_k$ and $h_k = \Tr{\BQ}\BZ \widetilde{\BOmega}_k \widetilde{\BPsi} \BZ^H \BPsi \bcR$, respectively and then apply the Cauchy-Schwarz inequality $\E^{1/2} |XY| \leq \E |X|^2 \E |Y|^2$. This can be achieved using the Nash-Poincaré inequality. We only state the results below and refer the reader to \cite{hachem2008mimo} for the detailed derivations. In particular, we have 
\begin{align*}
    \Var(g_k) = \cO\left(N_r^{-2} |z|^{-4}\right), ~~ \Var(h_k) = \cO\left(|z|^{-6} + |z|^{-8}\right), \numberthis
\end{align*}
for $z < 0$.  As a result, from \eqref{eq: Tr Q Psi}, we derived 
\begin{align}
\label{eq: trace bound intermidiate}
    \Tr \bcR (\E \BQ - \BPsi) = \cO\left(N_r^{-1}(|z|^{-4} + |z|^{-5})\right),
\end{align}
which establishes the convergence of $\BQ - \BPsi$ in the weak operator sense. 
\par
For the deterministic step, we connect $\BPsi$ with $\BTheta$ (recall the definition in \eqref{eq: de}). This can be accomplished by examining the differences $\underline{g}_k - \delta_k$ and the derivatives of $\underline{g}_k $ and $\delta_k$, respectively. Define $\Bdelta = [\delta_1, \ldots, \delta_K]^\top$ and $\underline{\Bg} = [\underline{g}_1, \ldots, \underline{g}_K]^\top$. By the resolvent identity $\BX - \BY = \BX(\BY^{-1} - \BX^{-1}) \BY$ and \eqref{eq: de}, we have
\begin{subequations}
\begin{align}
&(\BI_K - z^2\BGamma_{\delta} \widetilde{\BGamma}_{\delta})\Bdelta' =  \Bv_\delta, \label{eq: diffdelta} \\
&(\BI_K - z^2\BGamma_{g} \widetilde{\BGamma}_{g}) \underline{\Bg}' = \Bv_{g} + \mathbf{1}_K \cO(N_r^{-2}q_z), \label{eq: diff g} \\
&(\BI_K - z^2\BGamma_{g, \delta} \widetilde{\BGamma}_{g, \delta})(\underline{\Bg} - \Bdelta) =  \mathbf{1}_K \cO(N_r^{-2}q_z),  \label{eq: diff delta g}
\end{align}
\end{subequations}
where $q_z = |z|^{-4} + |z|^{-5}$ and
\begin{align*}
& [\Bv_{\delta}]_k = \frac{\Tr \BOmega_k \BTheta^2}{M_k} , ~~ [\Bv_{g}]_k = \frac{\Tr \BOmega_k \BPsi^2}{M_k} , \\ 
&[\BGamma_{\delta}]_{k, l} = \frac{\Tr \BOmega_k \BTheta \BOmega_l\BTheta}{M_k},  [\widetilde{\BGamma}_{\delta}]_{k, l} = \frac{\Tr \widetilde{\BOmega}_k \widetilde{\BTheta} \widetilde{\BOmega}_l \widetilde{\BTheta}}{M_k},\\
&[\BGamma_{g}]_{k, l} = \frac{\Tr \BOmega_k \BPsi \BOmega_k \BPsi}{M_k}, [\widetilde{\BGamma}_{g}]_{k, l} = \frac{\Tr \widetilde{\BOmega}_k\widetilde{\BPsi} \widetilde{\BOmega}_l \widetilde{\BPsi}}{M_k} ,\\
&[\BGamma_{g, \delta}]_{k, l} =  \frac{\Tr \BOmega_k \BPsi  \BOmega_l \BTheta}{M_k}, [\widetilde{\BGamma}_{g, \delta}]_{k, l} = \frac{\Tr \widetilde{\BOmega}_k \widetilde{\BPsi} \widetilde{\BOmega}_l \widetilde{\BTheta}}{M_k}. \numberthis 
\end{align*}
Here, \eqref{eq: diffdelta} and \eqref{eq: diff g} are derived to solve \eqref{eq: diff delta g} and evaluate the upper bounds for $\underline{\Bg} - \Bdelta$. In particular, one can verify that there exists a constant $C > 0$ such that  
\begin{equation}
\frac{C^{-1}}{(|z| + C)^2} \leq x \leq \frac{C}{|z|^2}    
\end{equation}
for any symbol $x $  belongs to $ \{[\Bv_\delta]_k, [\Bv_g]_k, [\Bdelta']_k, [\underline{\Bg}']_k\}_{k = 1}^K$. As a result, the matrix $\BI_K - z^2\BGamma_g \widetilde{\BGamma}_g$ is invertible when $N_r^{-2}q_z \leq C'/(|z| + C)^2$ for some constant $C'$ depending on $C$. By the Cauchy-Schwarz inequality $|\Tr \BX\BY^H| \leq (\Tr \BX\BX^H \Tr \BY\BY^H)^{1/2}$, we know that $|[\BGamma_{g, \delta}\widetilde{\BGamma}_{g, \delta}]_{k, l}| \leq ([\BGamma_{\delta}\widetilde{\BGamma}_{\delta}]_{k, l}[\BGamma_{g}\widetilde{\BGamma}_{g}]_{k, l})^{1/2}$. 
Using \cite[Lemmas 7, 8]{zhuang2025no}, it follows that $\BI_K - z^2\BGamma_{g, \delta}\widetilde{\BGamma}_{g, \delta}$ is also invertible in the region $\{z: N_r^{-2} q_z \leq C'/(|z| + C)^2\}$. A further analysis similar to that in \cite{zhuang2025no} yields
\begin{align}
    \norm{\underline{\Bg} - \Bdelta}_{\infty} = \cO\left( \frac{\mathsf{P}_4(|z|^{-1})}{N_r^2|z|^3} \right), ~~ z < 0.
\end{align}
where $\mathsf{P}_4$ is a polynomial of degree at most 4 with positive coefficients. Applying the resolvent identity yields 
\begin{align}
    \Tr \bcR \BPsi - \Tr \bcR \BTheta = \cO \left( \frac{P_4(|z|^{-1})}{N_r|z|^5} \right).
\end{align}
Combining the above bound with $\eqref{eq: trace bound intermidiate}$, we can get
\begin{align}
    \Tr \bcR (\E \BQ - \BTheta) = \cO\left( \frac{\cP_5(|z|^{-1}) }{N_r |z|^4}\right).
\end{align}
Therefore, we have completed the proof for Theorem \ref{thm: convergence resolvent}. \QED

\section{Proof of Theorem \ref{thm: EMI approx}}
\label{app: proof of thm EMI approx}
In this section, we apply Theorem \ref{thm: convergence resolvent} to prove Theorem \ref{thm: EMI approx}. 
We define the resolvents $\BQ_{\bcC}(x) = (-x\BI_{N_r} + \BH \BP_{\bcC} \BH^H)$, for $\bcC \in \widetilde{\Sigma}$. Setting the parameters $z = -\sigma^2$, $L_k = N_{r}$, $M_k = N_{t, k}$,  $\BA_k = \BB_k^{1/2}$, and $\widetilde{\BA}_k = \BPi_k\bcT_k^{1/2}\BP_{\bcC}^{1/2}$, where 
\begin{align}
    \BPi_k = \begin{bmatrix}
        \mathbf{0}_{N_{t, k} \times N_{t, 1}} &         \mathbf{0}_{N_{t, k} \times N_{t, 2}} & \ldots& \BI_{N_{t, k}}&\ldots
    \end{bmatrix} 
\end{align}
denotes the selection matrix of size $N_{t, k} \times N_t$, we obtain from Theorem \ref{thm: convergence resolvent} the fixed point equation in \eqref{eq: de MCR2 equation}. Moreover, the resolvent $ \BQ_{\bcC}(-x)$ can be approximated by $\BTheta_{\bcC}(-x)$, where 
\begin{align}
    \BTheta_{\bcC}(-x) = \frac{1}{x}\Bigg[ \BI_{N_r} + \sum_{k \in [K]} \widetilde{\delta}_{\cC, k}(-x) \BB_k \Bigg]^{-1}.
\end{align}
Recall the definition of $I_{\bcC}$ in \eqref{eq: I bcC term}.  Taking the derivative $\partial I_{\bcC}(\BV; x, \BH) / \partial x$ and noting that $\lim_{x \to + \infty} I_{\bcC}(\BV; x, \BH) = 0$, we have 
\begin{align*}
 I_{\bcC}(\BV; \sigma^2, \BH) &= \int_{\sigma^2}^{\infty} -\frac{\partial I_{\bcC}(\BV; x, \BH)}{\partial x} dx \\
 & = \int_{\sigma^2}^{\infty} \left( \frac{N_r}{x} - \Tr \BQ_{\bcC}(-x)  \right) dx.  \numberthis
\end{align*}
Setting $\bcR = \BI_{N_r}$ in Theorem \ref{thm: convergence resolvent}, we obtain $\Tr \E \BQ_{\bcC}(-x) = \Tr \BTheta_{\bcC}(-x) + \cO(N_r^{-1} x^{-4} {\mathsf{P}}_5(x^{-1}))$. Consequently, by Fubini's theorem,
\begin{align*}
   &\E I_{\bcC}(\BV; \sigma^2, \BH) = \int_{\sigma^2}^{\infty}\left( \frac{N_r}{x} -  \Tr \BTheta_{\bcC}(-x) \right) dx \\
   &+ \int_{\sigma^2}^{\infty} \cO\left( \frac{P_5(x^{-1})}{N_rx^4}\right) dx \\
   &= \int_{\sigma^2}^{\infty}\left( \frac{N_t}{x} -  \Tr \BTheta_{\bcC}(-x) \right) dx  + \cO(N_r^{-1}). \numberthis \label{eq: approx emi to Theta}
\end{align*}
Write $\overline{I}_{\bcC}(\BV; x) = f_{\bcC}(\{\delta_{\bcC, k}\}_{k=1}^K, \{\widetilde{\delta}_{\bcC, k}\}_{k=1}^K, x)$. Then, the derivative $\partial \overline{I}_{\bcC}(\BV; x) / \partial x$ can be given by
\begin{align*}
    \frac{\partial \overline{I}_{\bcC}(\BV; x)}{\partial x} &= \sum_{k \in [K]} \frac{\partial f_{\bcC}}{\partial \delta_{\bcC, k}} \frac{\partial \delta_{\bcC, k}}{\partial x} \\
    &+ \sum_{k \in [K]} \frac{\partial f_{\bcC}}{\partial \widetilde{\delta}_{\bcC, k}} \frac{\partial \widetilde{\delta}_{\bcC, k}}{\partial x}  + \frac{\partial f_{\bcC}}{\partial x}. \numberthis
\end{align*}
From the fixed point equations \eqref{eq: de MCR2 equation}, we obtain for each $l\in [K]$
\begin{align}
    \frac{\partial f_{\bcC}}{\partial \widetilde{\delta}_{\bcC, l}} = \Tr\BB_l \left( \BI_{N_r} + \sum_{k \in [K]} \widetilde{\delta}_{\bcC, k} \BB_k \right)^{-1} - x \delta_{\bcC, l} = 0,
\end{align}
 and similarly $\frac{\partial f_{\bcC}}{\partial {\delta}_{\bcC, l}} = 0$. Therefore, 
\begin{align*}
    &- \frac{\partial \overline{I}_{\bcC}}{\partial x} = \sum_{k \in [K]} N_{t, k} \delta_{\bcC, k}(-x) \widetilde{\delta}_{\bcC, k}(-x) \\
    & = \Tr (x\BI_{N_r})^{-1} (\sum_{k \in [K]} x\widetilde{\delta}_{\bcC, k}(-x) \BB_k)\BTheta_{\bcC}(-x) \\
    & = \frac{N_r}{x} -  \Tr \BTheta_{\bcC}(-x). \numberthis \label{eq: deri de I bcC}
\end{align*}
Using $\lim_{x \to +\infty} \overline{I}_{\bcC}(\BV; x) = 0$ and combining \eqref{eq: deri de I bcC} and \eqref{eq: approx emi to Theta}, we obtain
\begin{align}
    \E I_{\bcC}(\BV; \sigma^2, \BH) = \overline{I}_{\bcC}(\BV; \sigma^2) + \cO(N_r^{-1}),
\end{align}
which completes the proof of Theorem \ref{thm: EMI approx}. \QED

\begingroup
\small
\renewcommand{\baselinestretch}{0.94}\selectfont
\bibliographystyle{bibtex/IEEEtran}
\bibliography{rmt_mcr_references}

@article{cai2024taskcomm,
  title={Multi-device task-oriented communication via maximal coding rate reduction},
  author={Cai, Chang and Yuan, Xiaojun and Zhang, Ying-Jun Angela},
  journal={IEEE Transactions on Wireless Communications},
  volume={23},
  number={12},
  pages={18096--18110},
  year={2024},
  publisher={IEEE}
}

@article{marzetta2010noncooperative,
  title={Noncooperative cellular wireless with unlimited numbers of base station antennas},
  author={Marzetta, Thomas L},
  journal={IEEE transactions on wireless communications},
  volume={9},
  number={11},
  pages={3590--3600},
  year={2010},
  publisher={IEEE}
}

@article{love2008overview,
  title={An overview of limited feedback in wireless communication systems},
  author={Love, David J and Heath, Robert W and Lau, Vincent KN and Gesbert, David and Rao, Bhaskar D and Andrews, Matthew},
  journal={IEEE Journal on selected areas in Communications},
  volume={26},
  number={8},
  pages={1341--1365},
  year={2008},
  publisher={IEEE}
}

@book{tulino2004random,
  title={Random matrix theory and wireless communications},
  author={Tulino, Antonia M and Verd{\'u}, Sergio},
  year={2004},
  publisher={Now Publishers Inc}
}

@article{letaief2019roadmap,
  title={The roadmap to {6G}: {AI} empowered wireless networks},
  author={Letaief, Khaled B and Chen, Wei and Shi, Yuanming and Zhang, Jun and Zhang, Ying-Jun Angela},
  journal={IEEE Communications Magazine},
  volume={57},
  number={8},
  pages={84--90},
  year={2019},
  publisher={IEEE}
}

@article{letaief2022edgeai,
  title={Edge artificial intelligence for {6G}: Vision, enabling technologies, and applications},
  author={Letaief, Khaled B and Shi, Yuanming and Lu, Jianmin and Lu, Jianhua},
  journal={IEEE Journal on Selected Areas in Communications},
  volume={40},
  number={1},
  pages={5--36},
  year={2021},
  publisher={IEEE}
}

@article{reynolds2009gaussian,
  title={Gaussian mixture models.},
  author={Reynolds, Douglas A and others},
  journal={Encyclopedia of biometrics},
  volume={741},
  number={659-663},
  pages={3},
  year={2009},
  publisher={Springer City}
}

@article{kleywegt2002sample,
  title={The sample average approximation method for stochastic discrete optimization},
  author={Kleywegt, Anton J and Shapiro, Alexander and Homem-de-Mello, Tito},
  journal={SIAM Journal on optimization},
  volume={12},
  number={2},
  pages={479--502},
  year={2002},
  publisher={SIAM}
}

@inproceedings{wu20153d,
  title={3d shapenets: A deep representation for volumetric shapes},
  author={Wu, Zhirong and Song, Shuran and Khosla, Aditya and Yu, Fisher and Zhang, Linguang and Tang, Xiaoou and Xiao, Jianxiong},
  booktitle={Proceedings of the IEEE conference on computer vision and pattern recognition},
  pages={1912--1920},
  year={2015}
}

@article{gunduz2023beyond,
  title={Beyond transmitting bits: Context, semantics, and task-oriented communications},
  author={G{\"u}nd{\"u}z, Deniz and Qin, Zhijin and Aguerri, Inaki Estella and Dhillon, Harpreet S and Yang, Zhaohui and Yener, Aylin and Wong, Kai Kit and Chae, Chan-Byoung},
  journal={IEEE Journal on Selected Areas in Communications},
  volume={41},
  number={1},
  pages={5--41},
  year={2022},
  publisher={IEEE}
}

@article{wen2023task,
  title={Task-oriented over-the-air computation for multi-device edge AI},
  author={Wen, Dingzhu and Jiao, Xiang and Liu, Peixi and Zhu, Guangxu and Shi, Yuanming and Huang, Kaibin},
  journal={IEEE Transactions on Wireless Communications},
  volume={23},
  number={3},
  pages={2039--2053},
  year={2023},
  publisher={IEEE}
}

@article{Simonyan15,
  title={Very deep convolutional networks for large-scale image recognition},
  author={Simonyan, Karen and Zisserman, Andrew},
  journal={arXiv preprint arXiv:1409.1556},
  year={2014}
}

@article{cai2025end,
  title={End-to-end learning for task-oriented semantic communications over MIMO channels: An information-theoretic framework},
  author={Cai, Chang and Yuan, Xiaojun and Zhang, Ying-Jun Angela},
  journal={IEEE Journal on Selected Areas in Communications},
  volume={43},
  number={4},
  pages={1292--1307},
  year={2025},
  publisher={IEEE}
}

@article{shao2022learning,
  title={Learning task-oriented communication for edge inference: An information bottleneck approach},
  author={Shao, Jiawei and Mao, Yuyi and Zhang, Jun},
  journal={IEEE Journal on Selected Areas in Communications},
  volume={40},
  number={1},
  pages={197--211},
  year={2021},
  publisher={IEEE}
}

@article{shao2023multidevice,
  title={Task-oriented communication for multidevice cooperative edge inference},
  author={Shao, Jiawei and Mao, Yuyi and Zhang, Jun},
  journal={IEEE Transactions on Wireless Communications},
  volume={22},
  number={1},
  pages={73--87},
  year={2022},
  publisher={IEEE}
}

@article{larsson2014massive,
  title={Massive {MIMO} for next generation wireless systems},
  author={Larsson, Erik G and Edfors, Ove and Tufvesson, Fredrik and Marzetta, Thomas L},
  journal={IEEE communications magazine},
  volume={52},
  number={2},
  pages={186--195},
  year={2014},
  publisher={IEEE}
}

@article{yu2020mcr2,
  title={Learning diverse and discriminative representations via the principle of maximal coding rate reduction},
  author={Yu, Yaodong and Chan, Kwan Ho Ryan and You, Chong and Song, Chaobing and Ma, Yi},
  journal={Advances in neural information processing systems},
  volume={33},
  pages={9422--9434},
  year={2020}
}

@article{wang2024globalmcr2,
  title={A global geometric analysis of maximal coding rate reduction},
  author={Wang, Peng and Liu, Huikang and Pai, Druv and Yu, Yaodong and Zhu, Zhihui and Qu, Qing and Ma, Yi},
  journal={arXiv preprint arXiv:2406.01909},
  year={2024}
}

@article{hachem2007deterministic,
  title={Deterministic equivalents for certain functionals of large random matrices},
  author={Hachem, Walid and Loubaton, Philippe and Najim, Jamal},
  year={2007}
}

@article{hachem2008mimo,
  title={A new approach for mutual information analysis of large dimensional multi-antenna channels},
  author={Hachem, Walid and Khorunzhiy, Oleksiy and Loubaton, Philippe and Najim, Jamal and Pastur, Leonid},
  journal={IEEE Transactions on Information Theory},
  volume={54},
  number={9},
  pages={3987--4004},
  year={2008},
  publisher={IEEE}
}

@article{couillet2011macde,
  title={A deterministic equivalent for the analysis of correlated MIMO multiple access channels},
  author={Couillet, Romain and Debbah, M{\'e}rouane and Silverstein, Jack W},
  journal={IEEE Transactions on Information Theory},
  volume={57},
  number={6},
  pages={3493--3514},
  year={2011},
  publisher={IEEE}
}

@book{couillet2011book,
  title={Random matrix methods for wireless communications},
  author={Couillet, Romain and Debbah, Merouane},
  year={2011},
  publisher={Cambridge University Press}
}

@article{zhang2023doubleScattering,
  title={Asymptotic mutual information analysis for double-scattering MIMO channels: A new approach by Gaussian tools},
  author={Zhang, Xin and Song, Shenghui},
  journal={IEEE Transactions on Information Theory},
  volume={69},
  number={9},
  pages={5497--5527},
  year={2023},
  publisher={IEEE}
}

@article{zhang2024irsWiretap,
  title={Secrecy analysis for IRS-aided wiretap MIMO communications: Fundamental limits and system design},
  author={Zhang, Xin and Song, Shenghui},
  journal={IEEE Transactions on Information Theory},
  volume={70},
  number={6},
  pages={4140--4159},
  year={2023},
  publisher={IEEE}
}

@article{zhang2024nonCenteredNonSeparable,
  title={Fundamental limits of non-centered non-separable channels and their application in holographic MIMO communications},
  author={Zhang, Xin and Song, Shenghui and Letaief, Khaled B},
  journal={IEEE Transactions on Information Theory},
  year={2025},
  publisher={IEEE}
}

@article{jafar2005covariance,
  title={Multiple-antenna capacity in correlated Rayleigh fading with channel covariance information},
  author={Jafar, Syed Ali and Goldsmith, Andrea},
  journal={IEEE Transactions on Wireless Communications},
  volume={4},
  number={3},
  pages={990--997},
  year={2005},
  publisher={IEEE}
}

@article{jorswieck2004capacityRange,
  title={Channel capacity and capacity-range of beamforming in MIMO wireless systems under correlated fading with covariance feedback},
  author={Jorswieck, Eduard A and Boche, Holger},
  journal={IEEE Transactions on Wireless Communications},
  volume={3},
  number={5},
  pages={1543--1553},
  year={2004},
  publisher={IEEE}
}

@article{jorswieck2004correlation,
  title={Optimal transmission strategies and impact of correlation in multiantenna systems with different types of channel state information},
  author={Jorswieck, Eduard A and Boche, Holger},
  journal={IEEE Transactions on Signal Processing},
  volume={52},
  number={12},
  pages={3440--3453},
  year={2004},
  publisher={IEEE}
}

@article{xie2021deepsc,
  title={Deep learning enabled semantic communication systems},
  author={Xie, Huiqiang and Qin, Zhijin and Li, Geoffrey Ye and Juang, Biing-Hwang},
  journal={IEEE Transactions on Signal Processing},
  volume={69},
  pages={2663--2675},
  year={2021},
  publisher={IEEE}
}

@article{strinati2021goal,
  title={{6G} networks: Beyond Shannon towards semantic and goal-oriented communications},
  author={Strinati, Emilio Calvanese and Barbarossa, Sergio},
  journal={Computer Networks},
  volume={190},
  pages={107930},
  year={2021},
  publisher={Elsevier}
}

@article{telatar1999capacity,
  title={Capacity of multi-antenna Gaussian channels},
  author={Telatar, Emre},
  journal={European transactions on telecommunications},
  volume={10},
  number={6},
  pages={585--595},
  year={1999},
  publisher={Wiley Online Library}
}

@article{goldsmith2003capacity,
  title={Capacity limits of MIMO channels},
  author={Goldsmith, Andrea and Jafar, Syed Ali and Jindal, Nihar and Vishwanath, Sriram},
  journal={IEEE Journal on selected areas in Communications},
  volume={21},
  number={5},
  pages={684--702},
  year={2003},
  publisher={IEEE}
}

@article{vu2007linear,
  title={MIMO wireless linear precoding},
  author={Vu, Mai and Paulraj, Arogyaswami},
  journal={IEEE Signal Processing Magazine},
  volume={24},
  number={5},
  pages={86--105},
  year={2007},
  publisher={IEEE}
}

@article{Hoydis2013massivemimo,
  title={Massive MIMO in the UL/DL of cellular networks: How many antennas do we need?},
  author={Hoydis, Jakob and Ten Brink, Stephan and Debbah, M{\'e}rouane},
  journal={IEEE Journal on selected Areas in Communications},
  volume={31},
  number={2},
  pages={160--171},
  year={2013},
  publisher={IEEE}
}

@article{zhuang2025DBP,
  title={Decentralized MIMO systems with imperfect CSI using LMMSE receivers},
  author={Zhuang, Zeyan and Zhang, Xin and Xu, Dongfang and Song, Shenghui and Eldar, Yonina C},
  journal={IEEE Journal of Selected Topics in Signal Processing},
  year={2025},
  publisher={IEEE}
}

@article{zhuang2025no,
  title={No eigenvalues outside the limiting support of generally correlated and noncentral sample covariance matrices},
  author={Zhuang, Zeyan and Zhang, Xin and Xu, Dongfang and Song, Shenghui},
  journal={IEEE Transactions on Information Theory},
  year={2026},
  publisher={IEEE}
}

@ARTICLE{Junzhang2013,
  author={Zhang, Jun and Wen, Chao-Kai and Jin, Shi and Gao, Xiqi and Wong, Kai-Kit},
  journal={IEEE Journal on Selected Areas in Communications}, 
  title={On Capacity of Large-Scale MIMO Multiple Access Channels with Distributed Sets of Correlated Antennas}, 
  year={2013},
  volume={31},
  number={2},
  pages={133-148},
}

@ARTICLE{Wagner2012Seb,
  author={Wagner, Sebastian and Couillet, Romain and Debbah, Mérouane and Slock, Dirk T. M.},
  journal={IEEE Transactions on Information Theory}, 
  title={Large System Analysis of Linear Precoding in Correlated MISO Broadcast Channels Under Limited Feedback}, 
  year={2012},
  volume={58},
  number={7},
  pages={4509-4537},
}

@ARTICLE{Wen2007,
  author={Wen, Chao-Kai and Wong, Kai-Kit},
  journal=TIT, 
  title={Asymptotic Analysis of Spatially Correlated {MIMO} Multiple-Access Channels With Arbitrary Signaling Inputs for Joint and Separate Decoding}, 
  year={2007},
  volume={53},
  number={1},
  pages={252-268},
  month           = {Jan.},
}

@ARTICLE{Dumount2010Rician,
  author={Dumont, Julien and Hachem, Walid and Lasaulce, Samson and Loubaton, Philippe and Najim, Jamal},
  journal=TIT, 
  title={On the Capacity Achieving Covariance Matrix for {R}ician {MIMO} Channels: An Asymptotic Approach}, 
  year={2010},
  volume={56},
  number={3},
  pages={1048-1069},
  month           = {Mar.},
}

@ARTICLE{zhuang2025,
  author={Zhuang, Zeyan and Zhang, Xin and Xu, Dongfang and Song, Shenghui},
  journal={IEEE Transactions on Information Theory}, 
  title={Fundamental Limits of Two-Hop MIMO Channels: An Asymptotic Approach}, 
  year={2025},
  volume={71},
  number={6},
  pages={4069-4111},
}

@ARTICLE{Coui2011MIMOMAC,
  author={Couillet, Romain and Debbah, Mérouane and Silverstein, Jack W.},
  journal={IEEE Transactions on Information Theory}, 
  title={A Deterministic Equivalent for the Analysis of Correlated MIMO Multiple Access Channels}, 
  year={2011},
  volume={57},
  number={6},
  pages={3493-3514},
}

@ARTICLE{Xin2022,
  author={Zhang, Xin and Yu, Xianghao and Song, S.H.},
  journal={IEEE Journal of Selected Topics in Signal Processing}, 
  title={Outage Probability and Finite-SNR DMT Analysis for IRS-Aided MIMO Systems: How Large IRSs Need to be?}, 
  year={2022},
  volume={16},
  number={5},
  pages={1070-1085},
}
\endgroup

\end{document}